\newcommand{\univec}{\mathop{\vec{\mathbf{1}}}}
\newtheorem{problem}{Problem}[section]
\newtheorem{assumption}{Assumption}[section]
\newtheorem{example}{Example}[section]
\newtheorem{conjecture}{Conjecture}[section]
\newtheorem{remark}{Remark}[section]
\title{Parameter Optimization of Multi-Agent Formations based on LQR Design\thanks{This work is
completed during H. Huang's visit at the Australian National
University, which is supported by The National Natural Science
Foundation of China under grant 61074031; C. Yu is supported by the
Australian Research Council through an APD Fellowship under
DP-0877562 and subsequently a QEII Fellowship under DP-110100538}.}
\author{Huang Huang\thanks{School of Automation, Beijing Institute of
Technology, Beijing 100081, China ({\tt huanghuang@bit.edu.cn}).}
        \and Changbin Yu\thanks{RSISE, Building 115, The Australian
National University, Canberra, ACT, 0200, Australia ({\tt
brad.yu@anu.edu.au}).}}
\begin{document}

\maketitle

\begin{abstract}
In this paper we study the optimal formation control of multiple
agents whose interaction parameters are adjusted upon a cost
function consisting of both the control energy and the geometrical
performance. By optimizing the interaction parameters and by the
linear quadratic regulation(LQR) controllers, the upper bound of the
cost function is minimized. For systems with homogeneous agents
interconnected over sparse graphs, distributed controllers are
proposed that inherit the same underlying graph as the one among
agents. For the more general case, a relaxed optimization problem is
considered so as to eliminate the nonlinear constraints. Using the
subgradient method, interaction parameters among agents are
optimized under the constraint of a sparse graph, and the optimum of
the cost function is a better result than the one when agents
interacted only through the control channel. Numerical examples are
provided to validate the effectiveness of the method and to
illustrate the geometrical performance of the system.
\end{abstract}

\begin{keywords}
formation control, LQR, parameter optimization
\end{keywords}

\begin{AMS}
15A15, 15A09, 15A23
\end{AMS}

\pagestyle{myheadings} \thispagestyle{plain} \markboth{H. HUANG AND
C. YU}{SIAM MACRO EXAMPLES}

\section{Introduction}
The study of formations for a group of agents is inspired by the
behaviors of various animal species in nature. For instance, fish,
birds and ants always work in a cooperative manner so as to
accomplish tasks that are beyond the capability of single ones. The
formation of agents has its wide range of applications both in
civilian and military life, where there are generally three basic
parts: the determination of the underlying graph, the development of
cooperative algorithms among agents that concerns with an exact
assignment and the deployment of controllers that are responsible
for the stability of the overall system.

Upon the construction of the underlying communication graphs,
\cite{Anderson08rigidMagazine,Yu09minimalPersistent} discovered that
a geometry with an underlying graph being a rigid one or a
persistent one would not transform under smooth movements once
settled. Meanwhile, other graph properties such as connectivity,
strong connectivity or even the existence of a spanning tree are
proved to be sufficient to the convergence of consensus algorithms
\cite{Jadbabaie03nearestNeighbor,Ren08distributed,Fax04,Xiao09finiteFormation}.

In general, the optimization objectives in a formation system
include costs of either the relative formation
errors\cite{Liu10pathplanning4error}, the absolute position errors
\cite{Raffard04optimizationFormation,Semsar08optimalFormation} or
the minimal traveling
distances\cite{Zhang07optimalFormationConstrained}. On the other
hand, \cite{Bhatt09geometric} discussed the optimal relative layout
of wheeled robots in terms that the kinetic energy
 is minimized. \cite{Borrelli08LQR} focused on
distributed control design of large-scale dynamically isolated
systems using LQR optimization. In the trajectory planning of
formation reconfiguration, the main concerns are  the minimal
executing time\cite{Tillerson02}, the fuel
consumption\cite{Kim09pathplanning} and the kinetic energy
expenditure\cite{Afflitto10}. However, nearly all of the studies
made the assumptions that the local dynamics of agents were
pre-specified ones and agents were only coupled through the control
channels rather than being dynamically related in the open loop. In
real time applications such as sensor networks, it is always
flexible for us to control the behavior of each individual agent by
designating its interaction rules rather than simply setting to
zero. Thus problem arises on how to utilize this flexibility on
interaction parameters so as to improve the performance of a
formation system? Here by performance we focus on the geometries'
deformations with respect to the desired one during convergence and
the control energy cost.

Although the problem of parameter optimization is brand new in
formation control, in recent years, a similar problem has been
discussed in the field of consensus algorithms as seen in a limited
number of literatures. To our best knowledge, the optimal weight
design over a fixed network topology is addressed in
\cite{XiaoBoyd04fastLinearIterations} and \cite{Xiao07weight}.
  The optimal weight design under
random networks and switching networks are explored in
\cite{Jakovetic10weightOptimization} and \cite{Jakovetic10switching}
respectively.

In conventional formation control that concerns formation
maintenance while moving towards a destination, the discussion is
divided into two subsequential steps with the first one being the
relative formation maintenance or formation stabilization
irrespective of the destination.  For the second step, all agents in
formation move towards the destination led by one or two agents
known as the formation leader. This process is always termed
destination attainment. During the process, for the purpose of
formation maintenance, the translational speed of the formation
system is bounded due to the physical restrictions of each agent. In
bearing-only sensor-target localization, when a fleet of UAVs aiming
to locate a moving object accurately and as fast as possible, they
are expected to maintain an optimal geometry\cite{Bishop10sensor}
during the entire process. In such a case, formation maintenance is
more important. On the contrary, in payload transportation,
formation maintenance is not strictly demanded while a rapid task
execution time is the major concern.

To this end, in this paper we study the optimal formation control by
taking both the above two indicators into account simultaneously.
With different ratios of attentions being paid to those two
indicators, we define three kinds of cooperative performance  and
characterize each of them using a quadratic cost matrix. In order to
find the optimal interaction parameters between agents such that the
upper bound of the cost function is minimized, the LQR control
strategy is considered. Although in some literatures distributed
optimization was discussed over dynamically isolated agents, i.e.,
agents were interconnected only through the control
channel\cite{Langbort09minimal,Borrelli08LQR}, we aim to discover
that the optimum in those literatures could be further minimized
through some carefully selected interaction parameters among agents.

The remainder of this paper is organized as follows: In Section II,
we introduce the basic theories on LQR control into the optimal
formation control. In Section III, a simple system with two agents
is considered to illustrate the relationship between different kinds
of cooperative
 performance and the cost function we propose. In Section IV, the
optimization problem with multiple agents is investigated. For the
special case with homogeneous agents, we propose distributed
controllers design method by considering dynamically isolated
agents. For the more general case with heterogeneous agents, in
order to avoid the nonlinearity in the constraint conditions, a
relaxed optimization problem is further proposed and the structured
parameter matrix is calculated based on the subgradient method. In
Section V simulations are presented to validate those theoretical
results and to demonstrate the relationships between the cost
function and the cooperative performance. Finally conclusions are
given in Section VI.

\section{Preliminary}
Let $\mathbb{R}_S$ denote the set of real symmetric matrices,
$\mathbb{R}^+_S$ the set of positive definite matrices and
$\mathbb{R}_\Lambda$ the set of $n\times n$ diagonal matrices. The
set of eigenvalues of matrix $A$ is denoted by $\lambda(A)$ where
the largest one is $\bar{\lambda}(A)$.
The $n\times n$ identity matrix is $I_n$ whose $i$th column is
denoted by $e_i$. $\mathbb{M}$ is the set of nonsingular matrices.
Let $A_{i,j}$ be the $ij$th entry, then the $ij$th block of a matrix
$A$ and the $i$th block of a vector $v$ are defined by
$$A_{\bar{ij}}=\begin{bmatrix}
  A_{2i-1,2j-1} & A_{2i-1,2j}\\
A_{2i,2j-1} & A_{2i,2j}
\end{bmatrix}, v_{\bar{i}}=\begin{bmatrix}
v_{2i-1} \\ v_{2i}
\end{bmatrix}$$
respectively. We use $\diag(v), v\in\mathbb{R}^n$ or $\diag([v_i])$
to denote a diagonal matrix with diagonal entries from vector $v$.
Similarly, $\diag(A)$ when $A\in\mathbb{R}^{n\times n}$ is a vector
with entries from the diagonal of $A$. $\Lambda(A)$ is a diagonal
matrix that has the same diagonal entries as matrix $A$. We have
$\Lambda(A)=\diag(\diag(A))$. The 2-block diagonal of matrix $A$ is
denoted by $\bar{\Lambda}(A)$ where
$\bar{\Lambda}(A)_{\bar{ii}}=A_{\bar{ii}}$ and all the other blocks
are zeros.

An undirected  graph is denoted by $G=(V,E)$ with $|V|$ vertices and
$|E|$ edges. Node $i$ is a neighbor of node $j$, denoted by $i\sim
j$, if they are connected by an edge in $E$. We do not consider
selfloops in a graph. The adjacency matrix of $G$ is
$\mathcal{A}(G)\in\mathbb{R}_S^{|V|\times |V|}$ and is determined by
$$\mathcal{A}(G)_{ij}=\begin{cases}
  1,& i\sim j\\
  0,&\text{otherwise}
\end{cases}$$
A Graph $G$ is uniquely determined by an adjacency matrix thus $G$
is also said to be \emph{generated} from $\mathcal{A}$.

The incidence matrix of $G$ is the $|E|\times |V|$ matrix $H$
defined by $H_{ij}=1$ if vertex $j$ is an endpoint of edge $i$ and
$H_{ij}=0$ otherwise. The oriented incidence matrix $\bar{H}$ is
defined by replacing a 1 in each row by a $-1$.

For a symmetric matrix $A$ and an adjacency matrix $\mathcal{A}$
such that $A=A\circ\mathcal{A}$ where $\circ$ is the Hadamard
product, the \emph{underlying graph} of matrix $A$ is the one that
is generated from $\mathcal{A}$.

Consider a  system modeled by the standard state-space equation
$\dot{x}=A x+B u$ where $A,B$ is controllable and the quadratic cost
function
\begin{equation}\label{eq:J}
  J(x(0),u):\ =\int_0^\infty x^T(t) Q x(t)+u^T(t) u(t) dt
\end{equation}
where $Q\in \mathbb{R}^+_S$ and $x(0)$ is the initial state.
$\int_0^\infty x^T(t) Q x(t)$ is called the energy expenditure of
$x$. Based on LQR control theory, if we consider the state-feedback
controller
\begin{equation}
  u(t)=-B^T X^+x(t)
\end{equation}
where $X^+$ is the unique positive definite solution to the
algebraic Riccati equation
\begin{equation}\label{eq:riccati}
 A X+X A^T-XB B^T X+Q=0
\end{equation}
then the cost function $J$ in \eqref{eq:J} achieves its minimum of
\begin{equation}
  \bar{J}(x(0))=x(0)^T X^+ x(0)
\end{equation}
In order to concentrate on the parameter design issue, throughout
the paper it is assumed that $\|x(0)\|_2=1$. For the worst case,
$\bar{J}$ is upper bound by $\bar{J}=\bar{\lambda}(X^+)$, and
minimizing this upper bound of $\bar{J}$ is the main concern in this
paper.

However, $\bar{J}$ is the minimal value of $J$ only if  the matrices
$A, B$ and $Q$ in the ARE \eqref{eq:riccati} are determined ones. In
a formation system, the interactions between neighboring agents,
i.e., the offdiagonal entries in the system matrix $A$, are not any
intrinsic properties but rather some pending parameters to be
designed. Therefore, we focus primarily the following optimization
problem
\begin{align}
&J^*=\min_{A'}\sup\bar{J}(A)\label{eq:Jstar}\\
&\text{s.t. } u=-X^+x, \|x(0)\|_2=1\nonumber\\
&\text{\ \ \ \ \ }A=A_0+A', A_0=\diag(A)\nonumber
\end{align}
This paper aims to discuss this problem so as to exploit the
relationships between the interaction parameters $A'$ among agents
and the cooperative performance, and to further understand how to
design a formation system that meet the desired cooperative
performance in an optimistic way.

The following assumption is made true throughout the paper
\begin{assumption}\label{assum:local}
The local dynamics of each individual agent is fixed and known a
priori.
\end{assumption}

The system architecture we considered consists of two layers one of
which is the communication topology among agents as depicted by
graph $G_\alpha$. When agents are dynamically isolated, $G_\alpha$
is an empty graph. In the other layer, the controllers of individual
agents communicate with one another over an underlying graph denoted
by $G_\beta$.

\section{Optimal control of two agents formations}\label{sec:two}
In this section we focus on a simple system with two identical
agents to illustrate the novel optimization problem we consider.
More specifically, apart from minimizing  the control energy, we
also dive into the convergence process during destination attainment
and distinguish different kinds of cooperative performance that meet
various tasks requirements.

We consider two agents on positions $x_1$ and $x_2$ moving towards
their destinations $\bar{x}_1$ and $\bar{x}_2$ respectively along
parallel lines and in a cooperative manner, as shown in Fig.
\ref{fig:two_opt}. Generally there are two different cooperative
behaviors for the two agents one of which is the convergence to the
destination of each individual agent, as measured by the position
error $\delta x_i=x_i-\bar{x}_i, i=1,2$. The other one is the
formation maintenance between agents, as represented by the relative
formation error of each agent $i$: $|\delta x_1-\delta x_2|$. Three
kinds of performance could be considered concerning these two
behaviors:
\begin{itemize}
  \item Destination-first performance: each of the agents concentrates
  on moving towards its own destination in a selfish way. In Fig.
  \ref{fig:two_opt_goaled} agent 1 left agent 2 far behind although
  $x_1=\bar{x}_1$ at $t=1$.
  \item Formation-first performance: each of the agents pays too much
  attention to the other agent and tries to maintain the relative
  formation on its best efforts. In Fig. \ref{fig:two_opt_depend}
  agent 1 does not move
   until agent 2 catches up so that
  $|\delta x_1-\delta x_2|=0$.
  \item Neutral performance: each agents find itself a balance
  between the above two kinds of performance, i.e., efforts are put on formation
  maintenance and destination attainment simultaneously, as for
  example shown in Fig. \ref{fig:two_opt_illustrate}.
\end{itemize}
For the two agents formations, those three kinds of performance
could be captured by three parameters $\gamma_1,\gamma_2$ and
$\gamma_3$ in
\begin{equation}\label{eq:two_J}
  J_1=\int_0^\infty \gamma_1(\delta x_1-\delta x_2)^2+\gamma_2 \delta x_1^2+\gamma_3 \delta
  x_2^2 dt
\end{equation}
where $\gamma_1>0$ corresponds to the importance of formation
maintenance and $\gamma_2>0,\gamma_3>0$ correspond to the importance
of approaching the destination. Thus when $\gamma_1\ll \gamma_2$ and
$\gamma_1\ll\gamma_3$, the destination-first performance is
captured, and similarly the condition that $\gamma_1\gg \gamma_2$
and $\gamma_1\gg\gamma_3$ correspond to the formation-first
performance. When $\gamma_1\gamma_2\gamma_3\neq 0$, $J_1$ combines
both of the two kinds of performance with weights $\gamma_i$. In
this case, when $J_1$ reaches a stable value, $\delta x_1-\delta
x_2=0$, which indicates agent 1 and agent 2 has attained the
expected geometry, and $\delta x_1=\delta x_2=0$, which indicates
the achievement of the destination.

If we set the entries of the quadratic matrix $Q$ in \eqref{eq:J} as
$q_{11}=\gamma_1+\gamma_2$, $q_{12}=-\gamma_1$ and
$q_{22}=\gamma_1+\gamma_3$, $J_1$ then has the compact form of
$\int_0^{\infty}\delta x^TQ\delta x dt$.
 Thus the optimization problem concerning
the two-agent formation system could be stated as
\begin{figure}
\subfigure[Destination-first]{\label{fig:two_opt_goaled}
\begin{minipage}[b]{0.31\linewidth}
\centering
\includegraphics[scale=0.8]{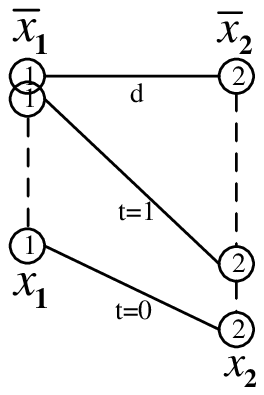}
\end{minipage}}
\subfigure[Formation-first]{\label{fig:two_opt_depend}
\begin{minipage}[b]{0.31\linewidth}
\centering
\includegraphics[scale=0.8]{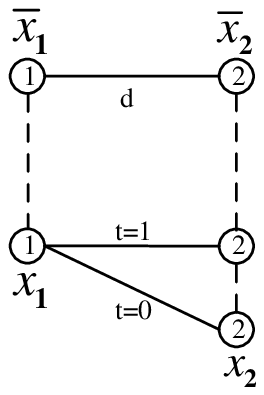}
\end{minipage}}
\subfigure[Neutral]{\label{fig:two_opt_illustrate}
\begin{minipage}[b]{0.31\linewidth}
\centering
\includegraphics[scale=0.8]{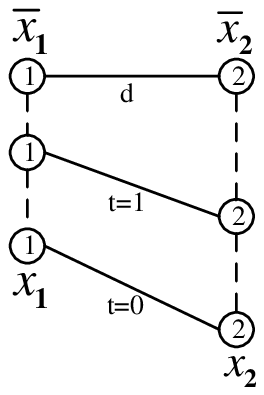}
\end{minipage}}
\caption{The three kinds of cooperative
performance}\label{fig:two_opt}
\end{figure}

\begin{problem}\label{problem:two}
Consider two agents with identical local dynamics
  \begin{equation}\label{eq:local_dyna}
    \dot{x}_i(t)=ax_i(t),i=1,2
  \end{equation}
 and are interconnected through parameters $r_1,r_2\in\mathbb{R}$. The overall system
 is modeled by
  \begin{equation}
  \bar{S}: \begin{bmatrix}\dot{x}_1\\ \dot{x}_2\end{bmatrix}=\begin{bmatrix}
    a&r_1\\r_2&a
  \end{bmatrix}\begin{bmatrix}
    x_1\\x_2
  \end{bmatrix}+\begin{bmatrix}
    u_1\\u_2
  \end{bmatrix}:=Ax+u
\end{equation}
Assume the quadratic matrix $Q=\begin{bmatrix} q&m\\m&q
  \end{bmatrix}>0$.
 Find the solution to the optimization problem
\begin{align}
      &\min\limits_{r_i\in\mathbb{R}} \|X^+\|_2&\nonumber\\
      &\text{s.t.\ \ }2a x_{11}-x_{11}^2+2r_2
      x_{12}-x_{12}^2+q=0&\label{eq:two1}\\
      &\text{\ \ \ \ \ \ }
      2r_1x_{12}-x_{12}^2+2ax_{22}-x_{22}^2+q=0&\label{eq:two2}\\
&\text{\ \ \ \ \ \ }
4ax_{12}-2x_{11}x_{12}+2r_2x_{22}-2x_{12}x_{22}+2r_1x_{11}+2m=0&\label{eq:two3}\\
&\text{\ \ \ \ \ \ }X^+=\begin{bmatrix} x_{11}&x_{12}\\x_{12}&x_{22}
\end{bmatrix} \in\mathbb{R}_S^+&\nonumber
    \end{align}
\end{problem}
\begin{remark}
When $Q$ is a positive definite matrix and when $B=I$, the existence
of the positive solution $X^+$ to the optimization problem is
guaranteed.
\end{remark}

The solution to the above optimization problem is discussed in
details in Section \ref{sec:multi}. Here we use a numerical example
to show that the optimal solution is obtained when $r_1\neq 0$ and
$r_2\neq 0$.
\begin{example}
Consider the cost matrix
$$Q=\begin{bmatrix}
  2 & -1\\-1& 2
\end{bmatrix}$$
According to \eqref{eq:two_J}, $Q$ indicates that the maintenance of
formation is as important as the convergence to the destination.

The local dynamics of each agent is uniquely chosen as a single
integral with $a=2$. First we consider agents with no interactions,
i.e., $A_1=2\diag(\univec)$ and $r_1=r_2=0$. By solving Problem
\ref{problem:two}, one can obtain the optimal upper bound of the
cost value $\bar{J}=4.65$.

On the other hand, if interconnections between agents are
considered, for example
$$A_2=\begin{bmatrix}
  2&1.2\\-1&2
\end{bmatrix},$$
the 2-norm of the solution $X^+$ is $\bar{\lambda}(X^+)=4.47$, which
is better than the previous one. The corresponding closed-loop
system matrix is
$$A_c=A_2-X^+=\begin{bmatrix}
  -2.46&1.21\\-0.98&-2.44
\end{bmatrix}$$
\end{example}

\begin{conjecture}If the local dynamics of agents are set to be $\delta
\dot{x}_i(t)=2\delta x_i(t),i=1,2$, for the worst case, the system
with interaction $A_2$ has better formation performance and consumes
less control energy than the situation when the two agents are
disjoint.
\end{conjecture}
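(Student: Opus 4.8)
The plan is to turn the two comparison claims into explicit computations of the stabilizing Riccati solution $X^+$ in each configuration, and then to split the resulting closed-loop cost into a geometric part and a control-energy part so that the two assertions --- ``better formation performance'' and ``less control energy'' --- can be checked separately.

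For the disjoint pair, $A=2I_2$, $B=I_2$, and $Q=\begin{bmatrix}2&-1\\-1&2\end{bmatrix}=I_2+\begin{bmatrix}1&-1\\-1&1\end{bmatrix}$, so that in \eqref{eq:two_J} one has $\gamma_1=\gamma_2=\gamma_3=1$. Since $A$ is a scalar multiple of the identity, \eqref{eq:riccati} collapses to $X^2-4X-Q=0$ with unique positive definite root $X^+=2I_2+(4I_2+Q)^{1/2}$; as $4I_2+Q$ has eigenvalues $7,5$ with eigenvectors $(1,-1)/\sqrt2,(1,1)/\sqrt2$, this gives $\bar\lambda(X^+)=2+\sqrt7$ and worst-case initial state $x(0)=(1,-1)/\sqrt2$. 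The closed loop is then $\dot x=-(4I_2+Q)^{1/2}x$, i.e. $x(t)=e^{-\sqrt7\,t}(1,-1)/\sqrt2$ on that state, and substituting into $\int_0^\infty x^TQx\,dt$ and $\int_0^\infty u^Tu\,dt=\int_0^\infty x^T(X^+)^2x\,dt$ yields a geometric-error energy of $3/(2\sqrt7)$ and a control energy of $(2+\sqrt7)^2/(2\sqrt7)$, whose sum is $2+\sqrt7\approx4.65$ as it must be.

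For the interconnected pair I would solve Problem \ref{problem:two} with $a=2$, $r_1=1.2$, $r_2=-1$, $q=2$, $m=-1$: this is the polynomial system \eqref{eq:two1}--\eqref{eq:two3} in $x_{11},x_{12},x_{22}$, whose positive definite solution is the $X^+$ of the Example, with $\bar\lambda(X^+)\approx4.47$ and Hurwitz closed loop $A_c=A_2-X^+$. The total-cost inequality $4.47<2+\sqrt7$ is then immediate. To separate the two performance indices, note that because $A_c$ is Hurwitz the geometric-error energy and the control energy on any state $x(0)$ equal $x(0)^TP_gx(0)$ and $x(0)^TP_ux(0)$, where $P_g,P_u$ are positive definite and solve the Lyapunov equations $A_c^TP_g+P_gA_c+Q=0$ and $A_c^TP_u+P_uA_c+(X^+)^2=0$; evaluating these at the worst-case initial state of the interconnected system and comparing term by term with the two disjoint figures $3/(2\sqrt7)$ and $(2+\sqrt7)^2/(2\sqrt7)$ completes the argument.

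The main obstacle is conceptual, not computational: the LQR cost only controls the \emph{sum} of the geometric-error energy and the control energy, so $\bar\lambda(X^+_{A_2})<\bar\lambda(X^+_{\mathrm{disjoint}})$ does not by itself force \emph{both} pieces to shrink, and it also leaves ambiguous what ``the worst case'' means once the cost is split (worst case for the total, or for each piece, and whether to use a common $x(0)$). The Lyapunov decomposition above is what makes the two comparisons well posed, and the delicate point is verifying the term-by-term ordering of $x(0)^TP_gx(0)$ and $x(0)^TP_ux(0)$ against the disjoint values --- this is where the numerical solution of the Riccati system for $A_2$ is actually used. A minor preliminary is that the positive definite root of \eqref{eq:two1}--\eqref{eq:two3} is unique and stabilizing, which follows from the standard LQR theory quoted after \eqref{eq:riccati} together with the Remark following Problem \ref{problem:two}.
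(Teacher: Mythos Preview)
The paper does not prove this statement: it is explicitly labeled a \emph{Conjecture} and is left entirely unproved. The authors observe numerically that $\bar\lambda(X^+)\approx 4.47<4.65$ for the coupled versus the disjoint system and then simply assert the conjecture as a plausible interpretation, moving immediately on to Section~\ref{sec:multi}. There is therefore no proof in the paper against which to compare your attempt.

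Evaluated on its own merits, your plan is the natural one and your diagnosis of the difficulty is exactly right. The Lyapunov decomposition $X^+=P_g+P_u$ with $A_c^TP_g+P_gA_c+Q=0$ and $A_c^TP_u+P_uA_c+(X^+)^2=0$ is precisely how one separates the state-weighted energy from the control energy under a fixed stabilizing feedback, and your closed-form treatment of the disjoint case is correct. But you stop short of a proof: the decisive step --- verifying that \emph{both} $x(0)^TP_gx(0)$ and $x(0)^TP_ux(0)$ in the coupled system are below the corresponding disjoint values --- is deferred to a numerical check that you do not actually perform, and as you yourself note there is no structural reason the sum inequality $\bar\lambda(X^+_{A_2})<\bar\lambda(X^+_{\mathrm{disjoint}})$ should force both pieces to decrease. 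You also correctly flag that the phrase ``for the worst case'' is ambiguous once the cost is split (worst case for the total, for $P_g$, for $P_u$, and whether the two systems are compared at the same $x(0)$ or each at its own worst state); this ambiguity is present in the paper's statement as well and would have to be resolved before any argument could be completed. In short: your framework is sound, but what you have is a proof outline whose crucial inequality is neither established nor obviously true --- which is consistent with the authors' decision to call the claim a conjecture rather than a theorem.
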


\section{Optimal Formation Control of Multiple
Agents}\label{sec:multi} For a group of agents that is assigned with
cooperative payload transport tasks, each agent aims to approach its
destination while maintaining a certain geometry during the process.
In most cases, one may have to choose either the destination-first
performance which may result in formation failure during the
convergence, or the formation-first performance which may increase
the task execution time. Similar issues arise in other situations
such as formation reconfiguration and  obstacle avoidance. Thus a
balance between those two kinds of performance is demanding and is
worth of exploration.

Consider a set $\mathcal{C}=\{Q\in\mathbb{R}_S^+:
q_{ii}>0,q_{ij}<0\}$. Then the three kinds of performance introduced
in Section \ref{sec:two} could be represented by choosing
appropriate entices of $Q\in\mathcal{C}$.

Under Assumption \ref{assum:local}, we decouple the matrix into
$$A=A_0+A'$$
where $A_0=\bar{\Lambda}(A)$ if the dynamics of each agent is in two
dimensions and $A_0=\Lambda(A)$ if it is in one dimension.
Specially, when $A'=0$, all agents are dynamically isolated with
each other in the open loop.

Based on the LQR control, when focusing on the worst case, the
optimization problem is formalized as
\begin{problem}\label{pro:multi}
\begin{align}\label{eq:areB}
      & \min\limits_{A'} \|X\|_2&\nonumber\\
  &\text{s.t.\ \
      }A^TX+XA-XB^TBX+Q=0&\\
      &\text{\ \ \ \ \ \
      }A=A_0+A',~A_0=\bar{\Lambda}(A),~(A,B)~\text{controllable}&\nonumber\\
      &\text{\ \ \ \ \ \ }Q\in
      \mathcal{C},~X\in\mathbb{R}_S^+&\nonumber
    \end{align}
    \end{problem}
When $A'$ solves the optimization problem, $(A,Q)$ is called a
\emph{matched pair} and the corresponding positive solution of the
ARE \eqref{eq:areB} is denoted by $X^*(A,Q)$ which is sometimes
abbreviated by $X^*$.

Following are two well-proved lemmas from  \cite{Lancaster95ARE}
 that motivate our research.
\begin{lemma}\label{lemma:conti}
  Consider the algebraic Riccati equation \eqref{eq:areB}.
The unitary stable hermitian solution $X^+$ is a continuous function
of the 3-tuple $(A,Q,B^TB)\in \Phi$ where
$\Phi:\{(A,Q,B^TB)|Q\in\mathbb{R}^+_S, (A,B) \text{
stabilizable}\}$.
\end{lemma}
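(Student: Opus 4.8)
The plan is to prove Lemma~\ref{lemma:conti} with the implicit function theorem, viewing the Riccati equation as an equation for $X$ whose coefficients $(A,Q,S)$, with $S:=B^TB$, play the role of parameters. Define, on the space $\mathbb{R}_S$ of symmetric matrices, the map
\[
F(X;A,Q,S)=A^TX+XA-XSX+Q,
\]
which is a polynomial in all its entries, hence $C^\infty$. Fix a base point $(A_0,Q_0,S_0)\in\Phi$ and let $X_0=X^+$ be its stabilizing solution, so $F(X_0;A_0,Q_0,S_0)=0$. The whole argument reduces to showing that the partial Fréchet derivative $D_XF$ at this point is a linear isomorphism of $\mathbb{R}_S$.

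Differentiating in a direction $H\in\mathbb{R}_S$ and using that $S_0$ and $X_0$ are symmetric,
\[
D_XF(X_0)[H]=A_0^TH+HA_0-HS_0X_0-X_0S_0H=A_{cl}^TH+HA_{cl},
\]
where $A_{cl}=A_0-S_0X_0$ is the closed-loop matrix of the stabilizing solution. Thus $D_XF(X_0)$ is precisely the Lyapunov operator $H\mapsto A_{cl}^TH+HA_{cl}$. Since $X_0$ is the \emph{stabilizing} solution, $A_{cl}$ is Hurwitz; the spectrum of the Lyapunov operator is $\{\mu+\bar\nu:\mu,\nu\in\lambda(A_{cl})\}$, and every such number has strictly negative real part and is therefore nonzero, so the operator is invertible on the finite-dimensional space $\mathbb{R}_S$. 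This is the heart of the matter: the derivative is nonsingular exactly along the stabilizing branch, which is why it is $X^+$ — and not an arbitrary symmetric solution — whose continuity we can assert.

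By the implicit function theorem there is an open neighborhood $U\subseteq\Phi$ of $(A_0,Q_0,S_0)$ and a $C^\infty$, in particular continuous, map $(A,Q,S)\mapsto\widetilde X(A,Q,S)$ with $\widetilde X(A_0,Q_0,S_0)=X_0$ and $F(\widetilde X;A,Q,S)\equiv0$ on $U$. It remains to check $\widetilde X=X^+$ throughout $U$. The map $(A,Q,S)\mapsto A-S\widetilde X(A,Q,S)$ is continuous and equals the Hurwitz matrix $A_{cl}$ at the base point; since the set of Hurwitz matrices is open, after shrinking $U$ this matrix stays Hurwitz on all of $U$, so $\widetilde X$ is a stabilizing symmetric solution there. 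On $\Phi$ the stabilizing solution is unique (positive definiteness of $Q$ supplies the detectability required by the standard existence-and-uniqueness theory for the ARE), whence $\widetilde X=X^+$ on $U$ and $X^+$ is continuous at $(A_0,Q_0,S_0)$. As the base point was arbitrary, $X^+$ is continuous on $\Phi$.

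The step I expect to be the main obstacle is the invertibility of $D_XF$: one has to recognize the derivative as a Lyapunov operator and then use, crucially, that it is the stabilizing solution whose closed-loop matrix is Hurwitz — without that, $D_XF$ may be singular and the stabilizing solution genuinely discontinuous in the data. A lesser but still necessary point is the patching step showing that the locally obtained implicit-function branch coincides with $X^+$, which hinges on the openness of the Hurwitz condition together with the global uniqueness of the stabilizing solution over $\Phi$.
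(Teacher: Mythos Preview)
Your argument is correct. The paper, however, does not supply its own proof of this lemma: it is quoted verbatim as one of ``two well-proved lemmas from \cite{Lancaster95ARE}'' and left without justification. So there is no in-paper proof to compare against; you have filled in what the authors simply imported.

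For context, the treatment in Lancaster--Rodman proceeds somewhat differently, typically through the graph subspace / Hamiltonian matrix viewpoint: one realizes $X^+$ as the matrix whose graph is the stable invariant subspace of the Hamiltonian $\begin{bmatrix}A & -S\\ -Q & -A^T\end{bmatrix}$, and then invokes perturbation theory for spectral invariant subspaces to obtain continuity (and, for the second lemma, local H\"older/Lipschitz estimates). Your implicit function theorem route is more elementary and self-contained; it gives local $C^\infty$ dependence for free, but it does not by itself yield the quantitative H\"older bound of the companion lemma, which is where the invariant-subspace machinery earns its keep. Both approaches rely on the same structural fact you identified as the crux: at the stabilizing solution the closed-loop matrix $A-SX^+$ is Hurwitz, which is exactly what makes the linearized problem (Lyapunov operator in your approach, spectral gap in theirs) nondegenerate.
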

When $A,B,Q$ are continuous functions of a vector $\gamma$, so does
$X^+$.
\begin{lemma}
Let
$\Omega_\varepsilon(A,Q,B^TB)=\{(\bar{A},\bar{Q},\bar{B}^T\bar{B})|
\|\bar{A}-A\|+\|\bar{Q}-Q\|+\|\bar{B}^T\bar{B}-B^TB\|<\varepsilon\}$.
For every $(A,Q,B^TB)\in \Phi$ there exist positive constants $K$
and $\epsilon$ such that
\begin{align}
  &\|\bar{X}^+(\bar{A},\bar{Q},\bar{B}^T\bar{B})-X^+(A,Q,B^TB)\|\leq \nonumber\\
  &K(\|\bar{A}-A\|+\|\bar{Q}-Q\|+\|\bar{B}^T\bar{B}-B^TB\|)^{1/2}
\end{align}
is satisfied for all $(\bar{A},\bar{Q},\bar{B}^T\bar{B})\in
\Omega_\epsilon(A,Q,B)$. $X^+$ is said to be a locally Lipschitz
continuous function of order $1/2$.
\end{lemma}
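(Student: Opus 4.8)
The plan is a Lyapunov-type perturbation argument, bootstrapped by the continuity statement of Lemma~\ref{lemma:conti}. Write $R=B^TB$, $X=X^+(A,Q,R)$, $\bar X=\bar X^+(\bar A,\bar Q,\bar R)$, and set $E=\bar X-X$, $\Delta A=\bar A-A$, $\Delta Q=\bar Q-Q$, $\Delta R=\bar R-R$, and $\Theta=\|\Delta A\|+\|\Delta Q\|+\|\Delta R\|$. A direct computation — subtract the unperturbed ARE \eqref{eq:areB} from its perturbed counterpart and collect the terms that are linear in $E$ and free of the increments — leads to an identity of the form
\begin{equation}\label{eq:sk1}
A_c^T E + E A_c = -F_1 - F_2(E) - F_3(E),
\end{equation}
where $A_c=A-RX$ is the closed-loop matrix of the stabilizing solution, $F_1=\Delta A^TX+X\Delta A-X\Delta R X+\Delta Q$ depends linearly on the increments, $F_2(E)$ is bilinear in $E$ and the increments, and $F_3(E)=-ERE-E\,\Delta R\,E$ is quadratic in $E$. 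Routine norm estimates give $\|F_1\|\le c_1\Theta$, $\|F_2(E)\|\le c_2\Theta\|E\|$ and $\|F_3(E)\|\le c_3\|E\|^2$, with $c_1,c_2,c_3$ depending only on $\|X\|$ and $\|R\|$ so long as $\Theta$ stays bounded.

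Next I would use the defining property of the stabilizing solution on $\Phi$, namely that $A_c$ is Hurwitz. Hence the Lyapunov operator $\mathcal{L}(Z)=A_c^TZ+ZA_c$ is boundedly invertible on the symmetric matrices, say $\|\mathcal{L}^{-1}\|=\kappa<\infty$, with $\kappa$ depending only on the unperturbed data $(A,Q,R)$. Applying $\mathcal{L}^{-1}$ to \eqref{eq:sk1} yields
\begin{equation}\label{eq:sk2}
\|E\|\le\kappa\big(c_1\Theta+c_2\Theta\|E\|+c_3\|E\|^2\big).
\end{equation}
By Lemma~\ref{lemma:conti}, $X^+$ is continuous on $\Phi$, so $\|E\|\to 0$ as $\Theta\to 0$; thus there is an $\epsilon>0$, local to $(A,Q,R)$, such that $\Theta<\epsilon$ forces $\kappa c_2\Theta\le\tfrac14$ and $\kappa c_3\|E\|\le\tfrac14$. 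For such perturbations the last two terms of \eqref{eq:sk2} absorb into the left side, giving $\|E\|\le 2\kappa c_1\Theta$; shrinking $\epsilon$ to be at most $1$ and using $\Theta\le\Theta^{1/2}$ on $(0,1)$ then yields the asserted inequality with $K=2\kappa c_1$.

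The one non-routine point — and the reason the statement must carry a pre-supplied $\epsilon$ rather than a global bound — is the quadratic term $F_3(E)$ in \eqref{eq:sk1}: inverting the Lyapunov operator by itself produces no estimate, since its right-hand side still contains $E$. The a priori smallness of $\|E\|$ furnished by the continuity Lemma~\ref{lemma:conti} is exactly what lets one absorb $F_2$ and $F_3$ and extract a clean bound. (Under the present hypothesis $Q\in\mathbb{R}^+_S$ the argument in fact gives the stronger Lipschitz-order-one estimate; the exponent $1/2$ is the sharp form that survives when $Q$ is only positive semidefinite, for then $A_c$ may carry imaginary-axis eigenvalues, $\mathcal{L}$ becomes singular, and one genuinely loses half a power.)
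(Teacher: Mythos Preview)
The paper does not supply its own proof of this lemma: it is quoted verbatim as one of ``two well-proved lemmas from \cite{Lancaster95ARE}'' and is used only to justify that the optimization Problem~\ref{pro:multi} is well-posed. There is therefore no in-paper argument to compare against.

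That said, your sketch is a correct and standard route. The subtraction of the two Riccati equations does indeed yield the Sylvester/Lyapunov identity \eqref{eq:sk1} with the three groups of terms you describe; the closed-loop matrix $A_c=A-RX$ is Hurwitz because $X^+$ is the stabilizing solution on $\Phi$, so the Lyapunov operator is boundedly invertible; and the bootstrap via Lemma~\ref{lemma:conti} is exactly what is needed to absorb the bilinear and quadratic remainders. Your parenthetical remark is also on point: under the paper's standing hypothesis $Q\in\mathbb{R}_S^+$ the closed-loop spectrum stays strictly in the open left half-plane, and your argument actually delivers the order-$1$ local Lipschitz bound $\|E\|\le 2\kappa c_1\Theta$ before you artificially weaken it to order $1/2$. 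The $1/2$ exponent in the cited result is the sharp one for the larger domain where $Q\ge 0$ is allowed and $A_c$ may touch the imaginary axis; in that setting the Lyapunov inverse blows up and a different (typically Schur/graph-subspace) argument is required, which is presumably what the cited reference carries out.
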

These properties of the solution $X^+$ guarantee the feasibility of
the optimization problem \ref{pro:multi}.

Recall Problem \ref{pro:multi}, although the objective function is
convex, the constraints of the ARE \eqref{eq:areB} on arguments
$X^+$ and $A_0$ are nonlinear, and the variable $X^+$ is expected to
be symmetric, which falls into the category of the nonlinear
semidefinite programming(NSDP) problem. Meanwhile, this minimax
problem is nonsmooth on $X^+$, thus most of the analytical
optimization techniques are inapplicable. Although the nonlinear
optimization toolbox in Matlab may sometimes give structured
solutions on both $A'$ and $X^+$, generally special cares are
required so as to accelerate the optimization process and it may
sometimes fail.

On the other hand, even if there are some analytical iterative
methods to solve the optimization problem \ref{pro:multi}, the
optimal solution of $X^+$ and $A'$ are dense matrices which
indicates that the underlying graphs $G_\alpha$ and $G_\beta$ are
complete graphs. This is obviously not applicative especially for
large-scale formation systems. In the remaining context, we will
first consider a special case where agents are homogeneous ones.
Distributed controllers are considered that inherit the
communication graph $G_\alpha$ among agents: $G_\alpha=G_\beta$, as
for example shown in Fig. \ref{fig:struc_homo}. Further for the
general situation, we discuss the optimization problem with topology
constraints imposed in the agents layer. Examples of the underlying
graphs $G_\alpha$ and $G_\beta$ are illustrated in Fig.
\ref{fig:struc_heter}.
\begin{figure}
 \subfigure[Congruent underlying graph for system with homogeneous agents]{\label{fig:struc_homo}
\begin{minipage}[b]{0.45\linewidth}
\centering
\includegraphics[scale=0.55]{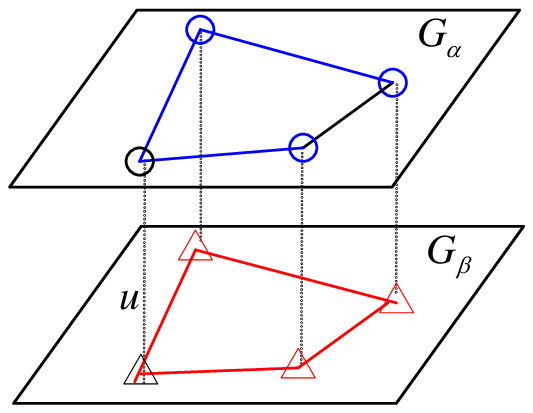}
\end{minipage}}
\subfigure[Different underlying graphs for system with heterogeneous
agents]{\label{fig:struc_heter}
\begin{minipage}[b]{0.45\linewidth}
\centering
\includegraphics[scale=0.55]{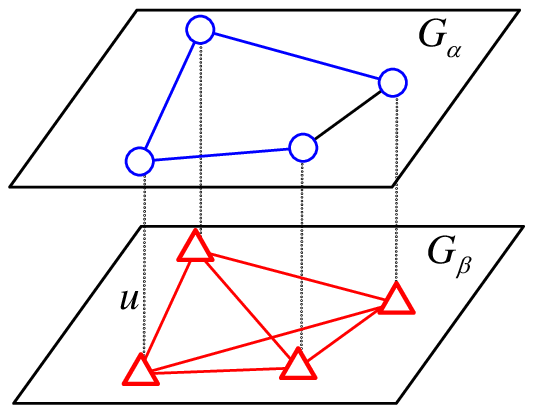}
\end{minipage}}
\caption{The underlying graphs of the two layers in a formation
system.}\label{fig:struc}
\end{figure}

\subsection{Distributed Controllers Design for Homogeneous Agents}
As mentioned in the preliminary, the formation system is broken down
into two layers each of which has its own communication topology. In
practical scenarios, agents and controllers always share
communication channels, therefore generally $G_\alpha=G_\beta$ is
more preferred. In this subsection, we consider formation systems
consist of homogeneous agents and let $G_\alpha=G_\beta\triangleq
G_h$. The following assumptions are also made
\begin{assumption}\label{assumption:homo}
In a formation system
\begin{itemize}
\item the dynamics of each agents is decoupled along $x$ and $y$
axis such that we can focus on the one-dimensional model of each
agent
\item the local dynamics of agents are congruent: $$\dot{x}_i=ax_i$$
\item the interaction parameters among any pairs of neighboring agents are
congruent:
\begin{equation}\label{sys:identical}
\dot{x}_i=ax_i+b\sum_{j\sim i}x_j
\end{equation}
\item the input matrix $B$ is an identical matrix. This assumption is realistic and is easy
to be satisfied in a formation system
\item the quadratic matrix $Q$ has identical diagonal entries and
identical nondiagonal entries, i.e.,
$$\diag(Q)=q\univec,~Q-\Lambda(Q)=p\mathcal{A}$$
where $\mathcal{A}$ is the adjacency matrix of the assigned graph
$G_h$.
\end{itemize}
\end{assumption}
Assumption \ref{assumption:homo} allows us to decouple the overall
formation system into $n$ independent subsystems, and thus the
design of distributed controllers is also broken down accordingly,
which leads to the following controller design method.
\begin{theorem}\label{theorem:homo}
  Consider system \eqref{sys:identical} and the Assumption
  \ref{assumption:homo}. The optimal solution $X^*$ to Problem \ref{pro:multi} is
  \begin{equation}\label{eq:homo_controller}
    X^*=xI+y\mathcal{A}
  \end{equation}
where $$x=a+\sqrt{a^2+q},~~y=\frac{2p}{2x-4a}$$ and is the optimal
controller when when $b=\frac{p}{2x-4a}$.
\end{theorem}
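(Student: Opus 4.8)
The plan is to use the homogeneity of Assumption~\ref{assumption:homo} to simultaneously diagonalize every matrix appearing in the Riccati equation \eqref{eq:areB}, solve the resulting family of \emph{scalar} Riccati equations, and then single out the member of the LQR solution family whose feedback inherits the graph $G_h$. First I would record that Assumption~\ref{assumption:homo} makes $A = aI + b\mathcal{A}$, $Q = qI + p\mathcal{A}$ and $B^{T}B = I$, so all three are real symmetric polynomials in $\mathcal{A}$ and are diagonalized by one orthogonal $U$ with $U^{T}\mathcal{A}U$ the diagonal matrix of eigenvalues $\mu_{1},\dots,\mu_{n}$. In the coordinates $z = U^{T}x$ the formation \eqref{sys:identical} decouples, exactly as anticipated in the paragraph before the theorem, into $n$ scalar systems $\dot z_{i} = (a+b\mu_{i})z_{i} + v_{i}$ with running cost $(q+p\mu_{i})z_{i}^{2} + v_{i}^{2}$. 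Since $Q$ is positive definite, $q+p\mu_{i} > 0$, so each scalar LQR problem is well posed and the unique positive definite solution of \eqref{eq:areB} is $X^{+}(b) = U\,\mathrm{diag}(\xi_{1},\dots,\xi_{n})\,U^{T}$ with $\xi_{i} = (a+b\mu_{i}) + \sqrt{(a+b\mu_{i})^{2} + (q+p\mu_{i})}$, the ``$+$'' branch being the one with $\xi_{i} > 0$ and stable closed-loop mode $a+b\mu_{i}-\xi_{i}<0$; in particular $\|X^{+}(b)\|_{2} = \max_{i}\xi_{i}$.

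Next I would impose the distributed structure, namely that $X^{+}$ carry the graph $G_h$, i.e.\ $X = xI + y\mathcal{A}$, which is precisely what makes $u = -Xx$ couple only neighbours in $G_h$. Substituting $X = xI + y\mathcal{A}$ into \eqref{eq:areB} and using $A^{T}=A$ produces $2AX - X^{2} + Q = 0$, a quadratic polynomial in $\mathcal{A}$; equating the coefficients of $I$, $\mathcal{A}$ and $\mathcal{A}^{2}$ to zero (sufficient always, and necessary whenever $\mathcal{A}$ has at least three distinct eigenvalues) yields $x^{2} - 2ax - q = 0$, $y^{2} - 2by = 0$ and $2ay + 2bx - 2xy + p = 0$. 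The first forces $x = a + \sqrt{a^{2}+q}$, the negative root being incompatible with positive definiteness; the second gives $y = 2b$ after discarding the degenerate root $y = 0$, which would leave $X$ diagonal and ignore the graph; substituting $y = 2b$ into the third gives $b(2x - 4a) = p$, and since $q > 0$ we have $2x - 4a = 2(\sqrt{a^{2}+q} - a) > 0$, whence $b = p/(2x-4a)$ and $y = 2b = 2p/(2x-4a)$ — exactly \eqref{eq:homo_controller} and the stated value of $b$.

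Finally I would close the loop by verifying that $X^{*} = xI + y\mathcal{A}$ with these values really is the positive definite stabilizing solution of the Riccati family, not a spurious root. Its eigenvalues are $x + y\mu_{i} = (a+b\mu_{i}) + (b\mu_{i} + x - a)$, and the coefficient identities above give $(b\mu_{i} + x - a)^{2} = (a+b\mu_{i})^{2} + (q+p\mu_{i})$; the crux is the sign, $b\mu_{i} + x - a > 0$ for every $i$, which forces $x + y\mu_{i} = \xi_{i} > 0$ so that $X^{*} = X^{+}(b^{*})$ and the closed loop $A - X^{*}$ has stable spectrum $\{-\sqrt{(a+b\mu_{i})^{2}+(q+p\mu_{i})}\}$. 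Since $p < 0$ the worst mode is the largest eigenvalue $\bar\lambda(\mathcal{A})$, and a short estimate reduces the required inequality — via $Q\in\mathcal{C}$, which gives $|p|\,\bar\lambda(\mathcal{A}) < q$ — to the trivial fact $q^{2}\ge 0$. With positivity and stability in hand, $u = -X^{*}x$ is the LQR-optimal feedback for \eqref{sys:identical} at $b = b^{*}$, and imposing the two-layer constraint $G_{\alpha} = G_{\beta} = G_h$ leaves $b = b^{*}$ (hence $X^{*}$) as the unique admissible choice, which is the sense in which $X^{*}$ solves Problem~\ref{pro:multi}.

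I expect the sign verification in the last step to be the main obstacle, since it is the only place where the combinatorial data of $G_h$ (through $\bar\lambda(\mathcal{A})$) and the positive definiteness of $Q$ genuinely interact. A secondary subtlety worth stating is that on graphs whose adjacency matrix has only two distinct eigenvalues $X^{+}(b)$ already has the form $xI+y\mathcal{A}$ for every $b$, so there the statement is best read as optimality \emph{within} the distributed parametrization rather than as a global minimization of $\|X^{+}(b)\|_{2}$ over $b$.
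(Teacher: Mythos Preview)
Your proof is correct and follows essentially the same route as the paper: write $A$, $Q$, and the candidate $X$ as polynomials $aI+b\mathcal{A}$, $qI+p\mathcal{A}$, $xI+y\mathcal{A}$, diagonalize simultaneously, match the coefficients of $I$, $\mathcal{A}$, $\mathcal{A}^{2}$ to recover $x$, $y$, $b$, and then verify positive definiteness of $X^{*}$. The only notable difference is in the positivity check: the paper substitutes $b=y/2$ back into the diagonalized equation to obtain the identity $(2a-x)(x+\lambda_i(\mathcal{A})y)+q+\lambda_i(\mathcal{A})p=0$, from which $2a-x<0$ and $q+\lambda_i(\mathcal{A})p>0$ (positivity of $Q$) force $x+\lambda_i(\mathcal{A})y>0$ directly---a one-line argument that is slightly cleaner than your explicit estimate reducing to $q^{2}\ge 0$, though yours has the merit of being self-contained and of flagging the two-eigenvalue caveat the paper does not mention.
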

\begin{proof}
  Under Assumption \ref{assumption:homo}, the system matrix $A$ and
  quadratic matrix $Q$ has the special form of
  \begin{equation}\label{eq:homoAQ}A=aI+b\mathcal{A},Q=qI+p\mathcal{A}\end{equation}
Assume the optimal solution has the form $X^*=xI+y\mathcal{A}$ such
that the ARE is rewritten as
  $$(aI+b\mathcal{A})(xI+y\mathcal{A})+(xI+y\mathcal{A})(aI+b\mathcal{A})-(xI+y\mathcal{A})^2+(qI+p\mathcal{A})=0$$
which has the equivalent expression of
\begin{equation}\label{eq:temp}
2(axI+bx\mathcal{A}+ay\mathcal{A}+by\mathcal{A}^2)-(x^2I+2xy\mathcal{A}+y^2\mathcal{A}^2)+(qI+p\mathcal{A})=0
\end{equation}
Let $P$ be the
  orthogonal permutation matrix of $\mathcal{A}$ such that
  $$\Lambda_\mathcal{A}=P^T\mathcal{A}P$$
  where the diagonal entries are the eigenvalues of the adjacency
  matrix $\mathcal{A}(G)$. Left and right multiplying the left hand side expression in equation \eqref{eq:temp} by $P^T$ and $P$
yields
\begin{equation}\label{eq:homo_inthm}
(2ax-x^2+q)I+(2bx+2ay-2xy+p)\Lambda_\mathcal{A}+(2by-y^2)\Lambda_\mathcal{A}^2=0
\end{equation}
Due to the feasibility of the ARE and the positive semidefinite
constraints on the solution $X^*$, the parameters $x,y,b$ are
\begin{align}\label{eq:homox}
&x=\frac{-2a- \sqrt{4a^2+4q}}{-2}\nonumber\\
&b=\frac{p}{2x-4a}\nonumber\\
&y=2b
\end{align}
Substituting $b$ by $y/2$ back into equation \eqref{eq:homo_inthm}
and with some trivial calculations, we obtain that
\begin{equation}\label{eq:homo_temp1}(2a-x)(x+\lambda_i(\mathcal{A})y)+q+\lambda_i(\mathcal{A})p=0\end{equation}
Due to the expression of $x$ in \eqref{eq:homox}, $2a-x<0$. As the
quadratic matrix $Q>0$, it is also true that
$q+\lambda_i(\mathcal{A})p>0$ for all $i$. Thus equation
\eqref{eq:homo_temp1} is satisfied if and only if
$$xI+y\mathcal{A}\in\mathbb{R}^+_s$$
This infers that $X^*$ with $x$ and $y$ given in \eqref{eq:homox} is
the positive definite solution and is the optimal controller when
$b=\frac{p}{2x-4a}$.
\end{proof}
\begin{remark}
   Controller $K=-X^*$ is the stabilization controller with underlying
graph generated from $\mathcal{A}$. Thus the communication
structures among agents and controllers are congruent ones, i.e.,
$G_\alpha=G_\beta$ as, for example, the one in Fig.
\ref{fig:struc_homo}.
\end{remark}
\begin{remark}
  The synthesis of controller for each
  agent is carried out in a distributed way and the control signal
  passed to each agent is generated using information from the neighbors. This property provides the
  formation system the feature of scalability.
\end{remark}

In \cite{Borrelli08LQR} the authors focused themselves on the
scalable synthesis of distributed controllers for large-scale
identical isolated subsystems, i.e. $A'\equiv 0$, and further
explored the relationship between the local controllers and the
stability of the overall system. Next we will show that, compared to
the formation system where $A'\equiv 0$, a set of carefully designed
interaction parameters would indeed give the overall system a better
cooperative performance and require less control energy.

\begin{corollary}\label{theorem:A'=0Q'=0}
Consider a formation system consists of $N$ agents under Assumption
\ref{assumption:homo}. For a matched pair $(A_0+A',Q)$, $A'=0$ if
and only if $Q\in \mathbb{R}_\Lambda$.
\end{corollary}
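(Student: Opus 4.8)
The plan is to exploit the closed-form solution already obtained in Theorem~\ref{theorem:homo}, which expresses the matched $A'$ and the optimal $X^*$ entirely in terms of the scalars $a,q,p$ and the adjacency matrix $\mathcal{A}$ of $G_h$. The key observation is that $Q \in \mathbb{R}_\Lambda$ means precisely $p = 0$ (the off-diagonal part $p\mathcal{A}$ of $Q$ vanishes), so the whole statement reduces to the equivalence ``$A' = 0$ iff $p = 0$'' under Assumption~\ref{assumption:homo}.

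First I would prove the ``if'' direction: suppose $Q \in \mathbb{R}_\Lambda$, i.e. $p = 0$. Then from \eqref{eq:homox} the matched interaction parameter is $b = \frac{p}{2x-4a} = 0$ (note $2x - 4a = 2\sqrt{a^2+q} - 2a \neq 0$ since $q > 0$, so the quotient is well-defined), hence $A = aI + b\mathcal{A} = aI = A_0$, giving $A' = 0$. Conversely, for the ``only if'' direction: suppose $A' = 0$, so $A = A_0 = aI$ and $b = 0$. Since $(A_0+A', Q)$ is a matched pair, the optimality characterization of Theorem~\ref{theorem:homo} applies, and the matched $b$ must equal $\frac{p}{2x-4a}$; setting this to zero and using $2x - 4a \neq 0$ forces $p = 0$, i.e. $Q - \Lambda(Q) = p\mathcal{A} = 0$, so $Q$ is diagonal.

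The one point that needs a little care — and is the main obstacle — is justifying that when $A' = 0$ the \emph{matched} value of $b$ is still governed by the formula $b = \frac{p}{2x-4a}$, rather than $b$ being a free leftover parameter. Here I would argue that the structured ansatz $X^* = xI + y\mathcal{A}$ is forced: under Assumption~\ref{assumption:homo} the pair $(A,Q)$ commutes with $\mathcal{A}$ and hence is simultaneously diagonalized by $P$, so the ARE \eqref{eq:areB} decouples eigenvalue-by-eigenvalue exactly as in \eqref{eq:homo_inthm}, and the minimax objective $\|X^*\|_2$ is minimized precisely by the choice of $b$ that makes \eqref{eq:homo_temp1} hold for every eigenvalue of $\mathcal{A}$ — which is $b = \frac{p}{2x-4a}$. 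Thus "matched'' pins down $b$ uniquely in terms of $p$, and the equivalence follows. I would also remark that $\mathcal{A} \neq 0$ (the graph $G_h$ has at least one edge) is implicitly needed so that $p\mathcal{A} = 0 \iff p = 0$; if $G_h$ were edgeless the statement is vacuous since then $Q$ is automatically diagonal and $A' = 0$ as well.
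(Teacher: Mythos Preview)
Your proposal is correct and follows essentially the same route as the paper: both directions are read off directly from the closed-form relation $b=\frac{p}{2x-4a}$ (with $y=2b$) established in Theorem~\ref{theorem:homo}, so that $p=0\iff b=0\iff A'=0$. Your argument is in fact more careful than the paper's short proof, since you explicitly check that $2x-4a\neq 0$ and flag the implicit assumption $\mathcal{A}\neq 0$; the paper's version simply invokes \eqref{eq:homo_inthm} and the formulas in \eqref{eq:homox} without these remarks (and even contains an apparent typo, writing ``$q=0$'' where ``$p=0$'' is meant).
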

\begin{proof}
Recall \eqref{eq:homo_inthm}, if $p=0$, due to the positiveness of
$q$, it is necessarily that $y=0$ and $b=0$, which indicate $A'=0$.
According to the solutions $x,y$ and $b$ in Theorem
\ref{theorem:homo}, when $b=0$, $q=0$, which proves the sufficient
condition.
\end{proof}

When $q_{ij}\equiv 0, i\neq j$, it means no cooperative behavior is
expected in a formation system. Corollary \ref{theorem:A'=0Q'=0}
indicates that in such cases system with disjoint agents can perform
better. However, due to the inverse negative proposition of
Corollary \ref{theorem:A'=0Q'=0}, as long as there exists
$q_{ij}\neq 0$, having agents communicate with others at some
appropriate parameters in the open loop is a better choice. Note
that these results only apply to system with identical agents and
with equivalent weights being put on the diagonal of the cost matrix
$Q$. Systems with heterogeneous agents are not eligible to this
property. Actually, when $a_{ii}$ is not identical, interactions are
required even if $q_{ij}=0$  so as to make up a matched pair.

For a set of homogeneous agents, by finding the appropriate
parameter $b$ between each pair of neighboring agents, we developed
the optimal state-feedback controllers that inherits exactly the
same underlying structure as the agents, and the upper bounds of the
quadratic cost function of the formation system is minimized.

\subsection{Heterogeneous Formation Systems under Structure Constraints}
In this subsection, we further consider a more general case where
all those restrictions in Assumption \ref{assumption:homo} no longer
hold true.  When the dynamics of agents is coupled along $x$ axis
and $y$ axis, the dimension of the overall system model is $2n$
instead of $n$ and thus the local dynamics as well as the
interaction parameters between neighbors are captured by $2\times 2$
matrices. In order to find optimal interaction parameters for a
formation system where agents communicate over an assigned graph
$G$, we define a set of structured matrices whose underlying graph
is $G$:
$$T_G=\{M\in \mathbb{R}^{2n\times 2n}: M_{\bar{ij}}=M_{\bar{ji}}=\mathcal{A}_{ij}M_{\bar{ij}},i\neq j\}$$
where $\mathcal{A}$ is the adjacency matrix of $G$.

 The state-space
model of the overall system is
\begin{equation}
  \dot{p}=Ap+Bu
\end{equation}
where $A\in T_G\cap \mathbb{R}_S$ and $B\in T_G\cap \mathbb{M}$.
Note that for the general case, we restrict ourselves to systems
where interactions among agents are independent of the transmission
direction, i.e., matrix $A$ is symmetric. This is only for the
purpose of simplicity and all the results bellow apply to
nonsymmetric case as well. Meanwhile, the structure of the input
matrix indicates that agents exchanges their states and their
control inputs simultaneously with the neighbors.

As mentioned before, the most difficult part when dealing with
Problem \ref{pro:multi} is the nonlinear constraints in the ARE.
This inspires us to exploit methodologies to eliminate this
restriction.

What we focus on in this paper is the two norm of $X^+$. We noticed
that the upper bound of the positive solution to the ARE
\eqref{eq:areB} has been discussed widely by researchers in the
related fields. Quite a few upper bounds are given in literatures
among which the following result borrowed from
\cite{Kang96AREupperbound} is known as a relatively tighter bound
than many others.
\begin{lemma}\label{lemma:XupperBound}
Consider ARE \eqref{eq:areB}. The norm of the positive semidefinite
solution $X^+$ is upper bounded by
  \begin{equation}\label{eq:XupperBound}
\|X^+\|\leq \|P\|^{1/2}\|APA+Q\|^{1/2}+\mu(AP)
\end{equation}
where $\mu(AP)=1/2\bar{\lambda}(AP+P^TA)$ and $P=(B^TB)^{-1}$.
\end{lemma}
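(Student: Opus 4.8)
The plan is to establish \eqref{eq:XupperBound} by completing the square in the Riccati equation, which is essentially the argument of \cite{Kang96AREupperbound}. Put $P=(B^TB)^{-1}$; since $B$ is nonsingular, $P$ is symmetric and positive definite, so $B^TB=P^{-1}$ and the ARE \eqref{eq:areB} can be rewritten as $XP^{-1}X-A^TX-XA=Q$. Adding $A^TPA$ to both sides, the left-hand side becomes a perfect square: setting $U:=P^{-1/2}X-P^{1/2}A$ and using that $X$ is symmetric one checks $U^TU=XP^{-1}X-XA-A^TX+A^TPA$, hence $U^TU=Q+A^TPA$. In the present subsection $A$ is symmetric, so $A^TPA=APA$ and therefore $U^TU=APA+Q$.

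Two elementary consequences then close the argument. First, $U^TU$ is positive semidefinite, so its spectral norm equals its largest eigenvalue and $\|U\|^2=\|U^TU\|=\|APA+Q\|$, that is, $\|U\|=\|APA+Q\|^{1/2}$. Second, solving $U=P^{-1/2}X-P^{1/2}A$ for $X$ gives $X=P^{1/2}U+PA$.

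It remains to bound $\|X\|$. Because $X=X^+$ is positive semidefinite, $\|X\|=\bar{\lambda}(X)=\sup_{\|v\|_2=1}v^TXv$, and for a unit vector $v$ we split $v^TXv=v^TP^{1/2}Uv+v^TPAv$. The first term is controlled by the Cauchy--Schwarz inequality, $v^TP^{1/2}Uv=(P^{1/2}v)^T(Uv)\le\|P^{1/2}v\|_2\,\|Uv\|_2\le\|P^{1/2}\|\,\|U\|=\|P\|^{1/2}\|APA+Q\|^{1/2}$, where $\|P^{1/2}\|=\|P\|^{1/2}$ because $P^{1/2}$ is the positive definite square root. The second term is the quadratic form of the symmetric part of $PA$, so $v^TPAv=\frac{1}{2}v^T(PA+A^TP)v\le\frac{1}{2}\bar{\lambda}(PA+A^TP)=\frac{1}{2}\bar{\lambda}(AP+P^TA)=\mu(AP)$, the middle identity using $P^T=P$ and $A^T=A$. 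Taking the supremum over unit $v$ yields \eqref{eq:XupperBound}.

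Since the statement is quoted from \cite{Kang96AREupperbound} there is no genuine obstacle; the only points requiring a little care are the legitimacy of the symmetric square-root manipulations (guaranteed by positive definiteness of $P$) and the decision to estimate $\|X\|$ through the quadratic form $v^TXv$ rather than via the triangle inequality $\|X\|\le\|P^{1/2}U\|+\|PA\|$, since the latter would only deliver the weaker term $\|PA\|$ in place of $\mu(AP)\le\|PA\|$.
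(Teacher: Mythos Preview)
Your argument is correct: the completion of the square $U^{T}U=Q+A^{T}PA$ with $U=P^{-1/2}X-P^{1/2}A$, followed by estimating $v^{T}Xv=v^{T}P^{1/2}Uv+v^{T}PAv$ via Cauchy--Schwarz on the first term and the symmetric part on the second, reproduces the bound exactly, and your remark that the quadratic-form estimate is needed (rather than the triangle inequality on $\|P^{1/2}U\|+\|PA\|$) to obtain $\mu(AP)$ instead of the looser $\|PA\|$ is well taken.

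The paper itself does not give a proof of this lemma; it simply quotes the bound from \cite{Kang96AREupperbound} as a known result. Your write-up is therefore not so much an alternative to the paper's proof as a reconstruction of the cited argument, and it is a faithful one.
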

\begin{remark}
This  upper bound of $\|X^+\|$ also validates the fact that the
minimal value of $J$ is correlated with $A$ and $Q$, and when $Q$
and $B$ are not diagonal matrices, the minimal upper bound of
$\|X^+\|$ is the one when $A$ is nondiagonal as well.
\end{remark}

We then propose the following relaxed optimization problem that is
parallel to Problem \ref{pro:multi}, but the nonlinear constraint is
avoided. Note that as we consider formation on a plane, the local
dynamics between each pair of agents is represented by  a square
matrix $A'_{\bar{ij}}$.
\begin{problem}\label{pro:multi_weight}
\begin{align}
  &\min_{A'} \phi(A')=\|P\|^{1/2}\|APA+Q\|^{1/2}+\frac{1}{2}\bar{\lambda}(AP+PA)\label{opt:weight}\\
  &\text{s.t.~~} A=A_0+A',~A_0=\bar{\Lambda}(A),~A\in T_G\cap \mathbb{R}_S,\nonumber\\
  &~~~~~~Q\in
  T_G\cap \mathcal{C},~B\in T_G\cap \mathbb{M},~P=(B^TB)^{-1}\nonumber
\end{align}
\end{problem}
The minimum of $\phi(A')$ is calculated when the the argument $A'$
fits into the desired underlying graph $G$. This minimum could be
further reduced when this structure restriction is abandoned, which
is, however, accompanied by the increase on the communication cost.
\begin{theorem}
  The optimization problem \eqref{opt:weight} is convex.
\end{theorem}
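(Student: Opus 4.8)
The plan is to show that the objective $\phi(A')$ in \eqref{opt:weight} is a convex function of $A'$ over the affine feasible set, since the feasible set $\{A'\, :\, A=A_0+A',\ A\in T_G\cap\mathbb{R}_S\}$ is clearly convex (it is an affine subspace, as $A_0=\bar{\Lambda}(A)$ fixes the block-diagonal and $T_G\cap\mathbb{R}_S$ is a linear subspace), and $P$, $Q$, $B$ are fixed data. So the whole claim reduces to checking that $A'\mapsto\phi(A')$ is convex. I would decompose $\phi=\phi_1+\phi_2$ with $\phi_1(A')=\|P\|^{1/2}\|APA+Q\|^{1/2}$ and $\phi_2(A')=\tfrac12\bar{\lambda}(AP+PA)$, and treat the two summands separately; convexity is preserved under addition, so it suffices to handle each.

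For $\phi_2$: the map $A'\mapsto AP+PA = (A_0+A')P+P(A_0+A')$ is affine in $A'$ with values in the symmetric matrices (using $P\in\mathbb{R}_S^+$ and $A\in\mathbb{R}_S$), and $\bar{\lambda}(\cdot)=\lambda_{\max}(\cdot)$ is a convex function on symmetric matrices (it is the pointwise supremum of the linear functionals $M\mapsto v^TMv$ over unit vectors $v$). Convex function composed with affine map is convex, so $\phi_2$ is convex. For $\phi_1$: I would argue that $A'\mapsto APA+Q$ is a quadratic matrix-valued map; writing $A=A_0+A'$, the term $APA = A_0PA_0 + A_0PA' + A'PA_0 + A'PA'$. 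The key point is that $APA+Q$ need not be positive semidefinite in general, so one cannot directly invoke $\lambda_{\max}$-type convexity; instead use the operator norm $\|\cdot\|$ (the spectral norm, $= \bar\lambda(\cdot)$ on PSD matrices but in general $=\sqrt{\bar\lambda(M^TM)}$). Since $A$ and $Q$ are symmetric, $APA+Q$ is symmetric, so $\|APA+Q\| = \max(\bar\lambda(APA+Q),\ -\underline\lambda(APA+Q)) = \max_{\|v\|=1}|v^T(APA+Q)v|$. For fixed $v$, the scalar $g_v(A')=v^T(APA+Q)v = (Av)^TP(Av)+v^TQv$ is a convex quadratic in $A'$ because $P\succeq 0$ and $A'\mapsto Av=(A_0+A')v$ is affine. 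Hence $|g_v(A')|^{1/2}$ is \emph{not} obviously convex — here is the subtle point — but $g_v$ being a nonnegative-definite quadratic form plus a constant is not enough; rather, I would use that $h\mapsto\|h\|_2$ composed appropriately is convex, or more cleanly: write $g_v(A')=\|P^{1/2}(A_0+A')v\|_2^2 + c_v$ with $c_v=v^TQv$, so $g_v(A')^{1/2}\le \|P^{1/2}(A_0+A')v\|_2 + |c_v|^{1/2}$ is a bound by a convex function; but we need equality-level convexity of $\phi_1$ itself.

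The cleaner route, and the one I would actually execute, is: $\phi_1(A') = \|P\|^{1/2}\,\bigl\||APA+Q|\bigr\|^{1/2}$ and note $\|APA+Q\| = \sup_{\|v\|=\|w\|=1} w^T(APA+Q)v$, but because of the square root this still needs care. So instead I invoke the standard fact that for a symmetric matrix $M$, $\|M\|^{1/2} = \bigl\|\,|M|^{1/2}\,\bigr\|$ where $|M|=(M^2)^{1/2}$, and more usefully $\|M\| = \max\{\,t : -tI\preceq M\preceq tI\,\}$, i.e. $\|APA+Q\|\le t \iff -tI\preceq APA+Q\preceq tI$. The right inequality $APA+Q\preceq tI$ is, for each $v$, $v^TAPAv\le t - v^TQv$, i.e. $\|P^{1/2}Av\|_2^2\le t-v^TQv$; replacing $t$ by $s^2$, the sublevel set $\{(A',s): \phi_1(A')\le s\}$ is (after clearing constants and the fixed factor $\|P\|^{1/2}$) described by second-order-cone constraints $\|P^{1/2}(A_0+A')v\|_2 \le \sqrt{s^{\,2}/\|P\| - v^TQv}$ jointly convex in $(A',s)$ — this is where the main work lies, namely verifying that the epigraph is convex by exhibiting it as an intersection of convex sets, handling the left inequality $-tI\preceq APA+Q$ analogously (it is automatic once $Q\succ0$ dominates, or reduces to an affine-in-$A'$ constraint since $APA\succeq0$). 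I expect the \textbf{main obstacle} to be precisely this bookkeeping around the square root and the two-sided spectral-norm bound: showing that the epigraph $\{(A',s):\phi_1(A')\le s\}$ is convex requires writing it carefully as a jointly convex (SOCP-representable) region rather than naively claiming $M\mapsto\|M\|^{1/2}$ is convex (which is false for indefinite $M$ without the quadratic structure). Once the epigraph of $\phi_1$ is shown convex and $\phi_2$ is convex by the $\lambda_{\max}\circ(\text{affine})$ argument, additivity of convex functions and convexity of the affine feasible set finish the proof.
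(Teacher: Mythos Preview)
Your decomposition $\phi=\phi_1+\phi_2$ and the treatment of $\phi_2$ via $\lambda_{\max}\circ(\text{affine})$ are correct and match the paper's intent. The paper's own argument for $\phi_1$ is far terser than yours: it simply observes that $APA+Q\succ0$ (because $P\succ0$ gives $APA\succeq0$, and $Q\succ0$), so $\|APA+Q\|=\bar\lambda(APA+Q)$, and then declares the rest ``self-evident.'' That glosses over exactly the square-root issue you flag, so your proposal is in fact more rigorous than the paper's proof.

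Your epigraph/SOCP route would work, but you can shortcut it considerably. You already wrote $g_v(A')=\|P^{1/2}(A_0+A')v\|_2^2+c_v$ with $c_v=v^TQv>0$, and then dismissed $g_v^{1/2}$ as ``not obviously convex''; in fact it \emph{is} convex, since
\[
g_v(A')^{1/2}=\bigl\|\bigl(P^{1/2}(A_0+A')v,\ \sqrt{c_v}\,\bigr)\bigr\|_2
\]
is the Euclidean norm of an affine function of $A'$. Because $APA+Q\succ0$ and $\sqrt{\cdot}$ is monotone on $[0,\infty)$, one has $\|APA+Q\|^{1/2}=\max_{\|v\|=1}g_v(A')^{1/2}$, a pointwise supremum of convex functions. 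Your concern about the two-sided constraint $-tI\preceq APA+Q\preceq tI$ evaporates once positivity of $APA+Q$ is used up front, exactly as the paper does: only the upper side matters. So your plan is sound, just more circuitous than necessary; the paper buys brevity by leaning on positive-definiteness early and suppressing the square-root bookkeeping you carefully unwind.
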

\begin{proof}
  It is well-known that $\bar{\lambda}(A)$ is a convex function on
  $A$. Due to the positiveness and symmetry of $APA+Q$, $\|APA+Q\|=\bar{\lambda}(APA+Q)$.
  The conclusion is then self-evident.
\end{proof}
Despite of the convexity, the function $\phi(A')$ is non-smooth and
thus is non-differential, which requires some special optimization
techniques. In our case, the constraints are linear matrices
constraints, thus it is then an ordinary linear programming(LP)
problem. Interior-point methods and subgradient methods are two of
the frequently used techniques when dealing with LP, as in
\cite{Jakovetic10weightOptimization,Kar08topologyDesign,Boyd06randomizedGossipAlgorithm,XiaoBoyd04fastLinearIterations}.
More specifically, a similar problem is discussed in
\cite{Jakovetic10weightOptimization} where the objective function
$\phi$ is the expectation of the system matrix. Due to this
similarity, here we also adopt the subgradient algorithm  as in
\cite{Jakovetic10weightOptimization} to deal with this structure
constrained optimization problem.

Matrix $A$ is actually an affine function of the interaction
parameters $A_{ij},i\sim j$. Assume the underlying graph $G$ has $m$
edges indexed from $1$ to $m$. Define a vector $x\in\mathbb{R}^{4m}$
that is partitioned into $m$ sub-vectors
$x_k=\begin{bmatrix}x_k^1&x_k^2&x_k^3&x_k^4\end{bmatrix}^T,
k\in[1,m]$. The entries $x_k^p,p\in[1,4]$ indicate the interactions
between agent $i$ and agent $j$ that are connected by edge $l$:
$$\begin{bmatrix}
  x_k^1 & x_k^2\\
x_k^3 & x_k^4
\end{bmatrix}=A_{\bar{ij}}$$
Then the affine form of $A$ on vector $x$ is
\begin{equation}\label{WET:A_affine}
  A(x)=A_0+\sum^m_{\substack{
k=1,j> i\\j\sim i}} (x_k^1 E_{\bar{ij}}^1+
  x_k^2 E_{\bar{ij}}^2+ x_k^3 E_{\bar{ij}}^3 +x_k^4
  E_{\bar{ij}}^4)
  \end{equation}
  where
  \begin{align*}
  &E_{\bar{ij}}^1=e_{2i-1}e_{2j-1}^T+e_{2j-1}e_{2i-1}^T\\
&E_{\bar{ij}}^2=e_{2i-1}e_{2j}^T+e_{2j}e_{2i-1}^T\\
&E_{\bar{ij}}^3=e_{2i}e_{2j-1}^T+e_{2j-1}e_{2i}^T\\
&E_{\bar{ij}}^4=e_{2i}e_{2j}^T+e_{2j}e_{2i}^T
\end{align*}
The partial differential of $A'(x)$ or $A(x)$ with respect to
$x_k^p$ is
\begin{equation}
  \frac{\partial A(x)}{\partial x_k^p}=E_{\bar{ij}}^p, p\in\{1,2,3,4\}
\end{equation}
and further
\begin{equation}
  L(x_k^p):\triangleq \frac{\partial (APA+Q)}{\partial x_k^p}=APE_{\bar{ij}}^p+E_{\bar{ij}}^pPA, p\in\{1,2,3,4\}
\end{equation}
The Hessian of $\phi(A')$ is then
\begin{align}
  H_{ij}^p|_{x=x_k}&=Mv^T\frac{\partial(APA+Q)}{\partial
  x_k^p}v+\frac{1}{2}u^T\frac{\partial (AP+PA)}{\partial x_k^p}u\nonumber\\
  &=Mv^TL(x_k^p)v+\frac{1}{2}u^T(E_{\bar{ij}}^pP+PE_{\bar{ij}}^p)u
\end{align}
where $M=\frac{\sqrt{\|P\|}}{2\sqrt{\|APA+Q\|}}$. Vector $v$ is the
unit eigenvectors associated with the largest eigenvalue of $APA+Q$
and $u$ is the one associated with $\bar{\lambda}(AP+PA)$.
 The optimization algorithm is then carried out by computing the
subgradient of $\phi(A^k)$ at $A^k$ on each step with stepsize
$\gamma^k>0$. The stepsize satisfies the diminishing rule:
$$\lim_{k\to\infty} \gamma^k=0,\sum_1^\infty \gamma^k=\infty$$
As pointed out in \cite{XiaoBoyd04fastLinearIterations}, for the
convex optimization problem, the convergence of the algorithm is
well-known.
\begin{algorithm}
   {\textbf{Initialize} $A'^1\in T_G$, $B\in T_G$ and $Q\in
  T_G\cap \mathcal{C}, ~A$, $i=1$, $\epsilon>0$\;
\textbf{Repeat}\;
    \ \ \ Compute $M$\;
    \ \ \ Compute $v^i$ and $u^i$\;
    \ \ \ Compute $H^i$ at $x^i$\;
    \ \ \  Set $A'^{i+1}=A'^i-\gamma^i H^i$\;
    \ \ \  $i:=i+1$\;}
\textbf{Until} $|\Delta\phi(A')|<\epsilon$\;
  \caption{Optimal Parameters Design}\label{alg1}
\end{algorithm}

The optimal parameters generated from Algorithm \ref{alg1} minimize
the upper bound of $\|X^*\|$ in the ARE \eqref{eq:areB}. We point
out that there indeed have certain cases where a lower upper bound
may relate to a higher value of $\|X^*\|$. However the original
Problem \ref{pro:multi} is a great trouble while this \emph{relaxed}
optimization problem could be carried out using the reliable
subgradient method. Meanwhile, in most of cases we experimented,
lower upper bound of $\|X^*\|$ always corresponds to a smaller value
of $\|X^*\|$.

Substituting the structured matrix $A$ generated from Algorithm
\ref{alg1} into the ARE \eqref{eq:areB}, the resulting LQR
controller $K=-B^TX^*$ is possibly a dense matrix without any
structure constraints because of the density of $X^*$. In general,
$X^*$ is usually a matrix with nonzero entries even when $A$ is a
sparse matrix, as for example the ones in Fig.
\ref{fig:struc_heter}. This result is consistent to
\cite{Langbort09minimal} where when no communication expenditure was
considered, the undirected underlying graph of the optimal
interaction graph among the LQR controllers was typically a complete
graph. This reveals that a system with distributed controllers
consumes less energy compared with a decentralized one. However a
complex topology is obviously not practical, especially when the
scale of the system grows. In the above research, due to the
uniqueness of the positive solution of the ARE \eqref{eq:areB}, it
is impossible to find any structured solution so as to reduce the
communication burden among controllers. Other methodologies are
required to deal with this deficiency, which is our ongoing works.

In this section, the optimization problem is built up with the
objective function being nonsmooth and with nonlinear constraints.
Structure restrictions on the underlying graphs are introduced for
the purpose of realtime applications. To find an analytical solution
to the optimization problem, a special formation system with
homogeneous agents is considered. The system is decoupled into $n$
independent subsystems and the desired underlying graphs are imposed
both among agents and among controllers. For the more general case,
a relaxed optimization problem is proposed that eliminates the
nonlinear constraints and thus is solved based on the subgradient
method. The underlying graph $G_\alpha$ is designed to be the
desired one.

\section{Examples}
The proposed algorithms are validated on systems with identical
agents and systems with heterogeneous agents respectively. In order
to observe the cooperative performance of the relative formations
during convergence more intuitively, we consider the relative
formation system in the edge space that is transformed through
$$e=(\bar{H}\otimes I_2)p$$
where $\bar{H}$ is the oriented incidence matrix of a rigid graph
$\bar{G}=(\bar{V},\bar{E})$ and $p$ is the coordinates of agents.
For the desired formation $p_d\in\mathbb{R}^{2n}$ of $n$ agents and
the corresponding $e_d\in\mathbb{R}^{2|\bar{E}|}$, we define a
relative formation error function with respect to $e_d$ as
\begin{equation}\label{eq:fe}
fe(t)=\sum\left|\frac{\|e_i\|}{\|e_1\|}-\frac{\|e_{d_i}\|}{\|e_{d_1}\|}\right|
\end{equation}
When $fe=0$, the geometry formed by the agents coincide with the
desired geometry although their actual coordinates may differ. A
larger relative formation error corresponds to greater deformation
effects of a geometry with respect to the desired one.
\begin{figure}
 \subfigure[$G_a$]{\label{fig:topologya}
\begin{minipage}[b]{0.45\linewidth}
\centering
\includegraphics[scale=0.55]{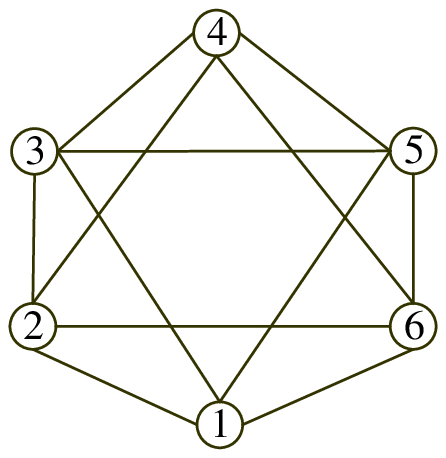}
\end{minipage}}
\subfigure[$G_b$]{\label{fig:topologyb}
\begin{minipage}[b]{0.45\linewidth}
\centering
\includegraphics[scale=0.55]{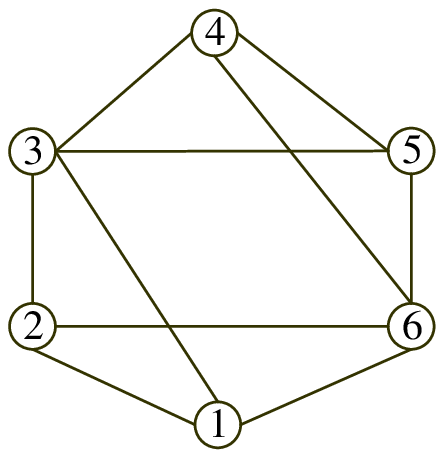}
\end{minipage}}
\caption{Two graphs with six vertexes}\label{fig:topology}
\end{figure}

\subsection{Formations with homogeneous agents}
For a formation system with identical agents and identical
interaction parameters, Theorem \ref{theorem:homo} provides an
analytical method to design distributed controllers so as to
minimize the cost function under the worst case.

The underlying graph is the one in Fig. \ref{fig:topologya}. The
parameters in \eqref{eq:homoAQ} are $a=2,q=8$ and $p=-2$. According
to Theorem \ref{theorem:homo}, the distributed controller $X^*$ has
identical diagonal entries of 5.46 and nondiagonal entries of -1.37.
The trajectories of the six agents are shown in Fig.
\ref{fig:trjc_six_homo} with snapshots in Fig.
\ref{fig:homo_snapshot}. Fig. \ref{fig:pe_homo} demonstrates the
position errors of the six agents with respect to the desired
positions. It takes approximately 8 seconds for them to achieve the
desired positions. However, according to Fig. \ref{fig:fe_homo}, at
approximately $t\approx2sec$, the relative formation error converges
to zero, which indicates that with high attentions being paid to the
relative formations of the six agents, they attain the desired
geometry before arriving at the destination.

In order to observe different kinds of cooperative performance, the
six agents are initialized at the same spot as shown in Fig.
\ref{fig:homo_stratage}. When $q=8,p=-2$, relative formation is more
important during the assignment, thus agents spread immediately
after they set off and maintain the desired geometry when
approaching the destination. On the other hand,  if the
destination-first performance is more impotent in the mission, we
set $q=3$ and $p=-0.2$. Then the six agents approach their own
destinations radially and form the desired geometry at almost the
same time as they achieve the destinations at $t\approx 3sec$. The
formation errors for the two kinds of performance are shown in Fig.
\ref{fig:homo_compare} in blue solid lines and red dashed lines
respectively. This infers that the parameters in the cost matrix $Q$
allow us to take into account the geometrical performance during
convergence.

\begin{figure}
\begin{center}
\includegraphics[scale=0.4]{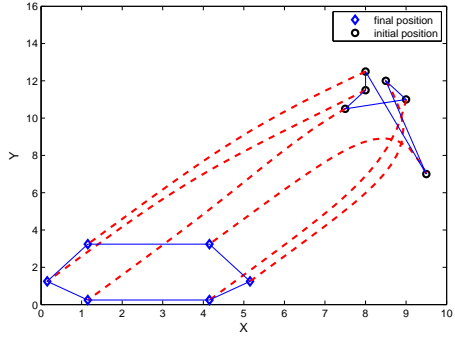}
\caption{Formations of six agents}\label{fig:trjc_six_homo}
\end{center}
\end{figure}

\begin{figure}
 \subfigure[$t=0.68s$]{
\begin{minipage}[b]{0.23\linewidth}
\centering
\includegraphics[scale=0.2]{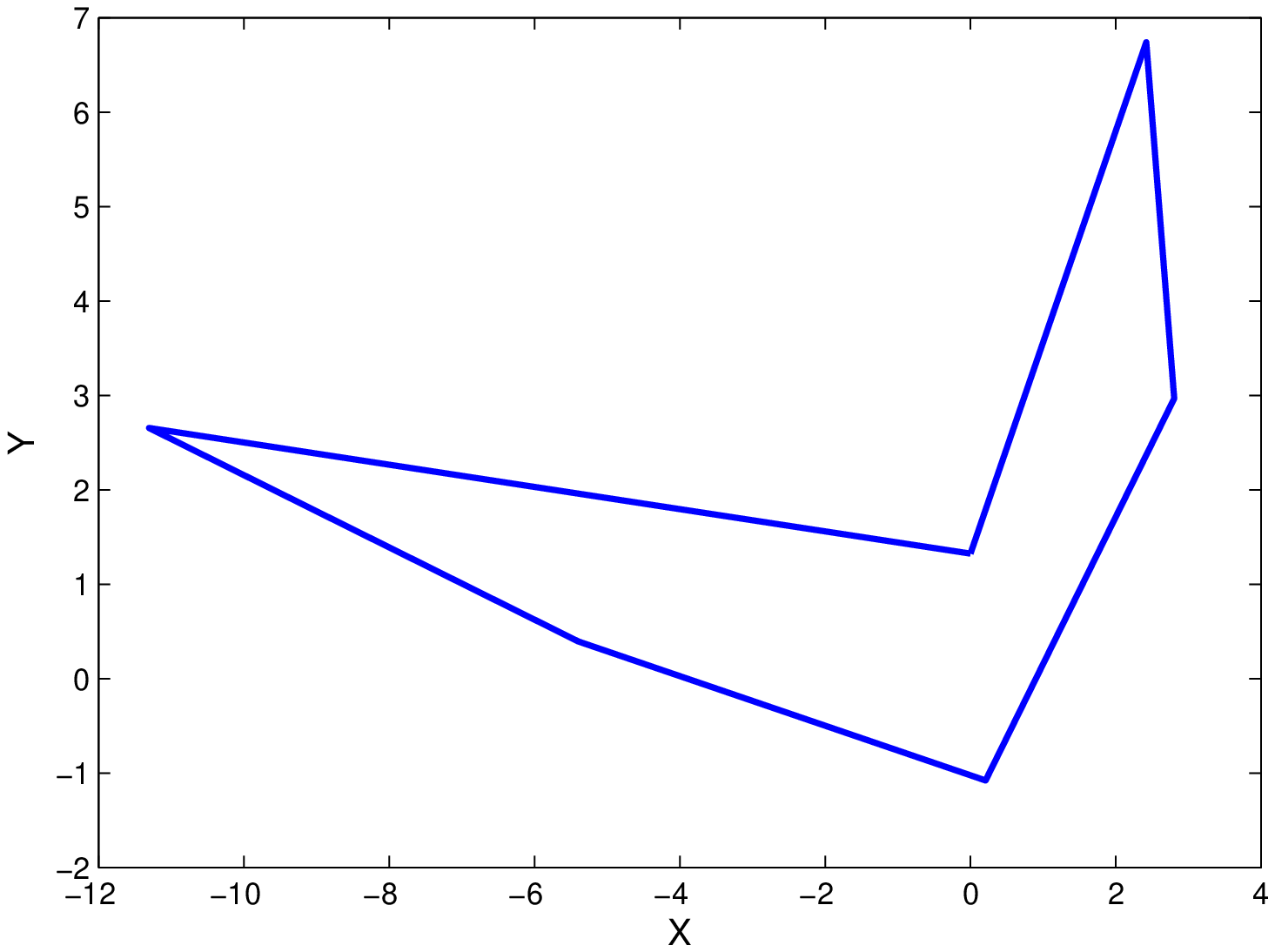}
\end{minipage}}
\subfigure[$t=1.48s$]{
\begin{minipage}[b]{0.23\linewidth}
\centering
\includegraphics[scale=0.2]{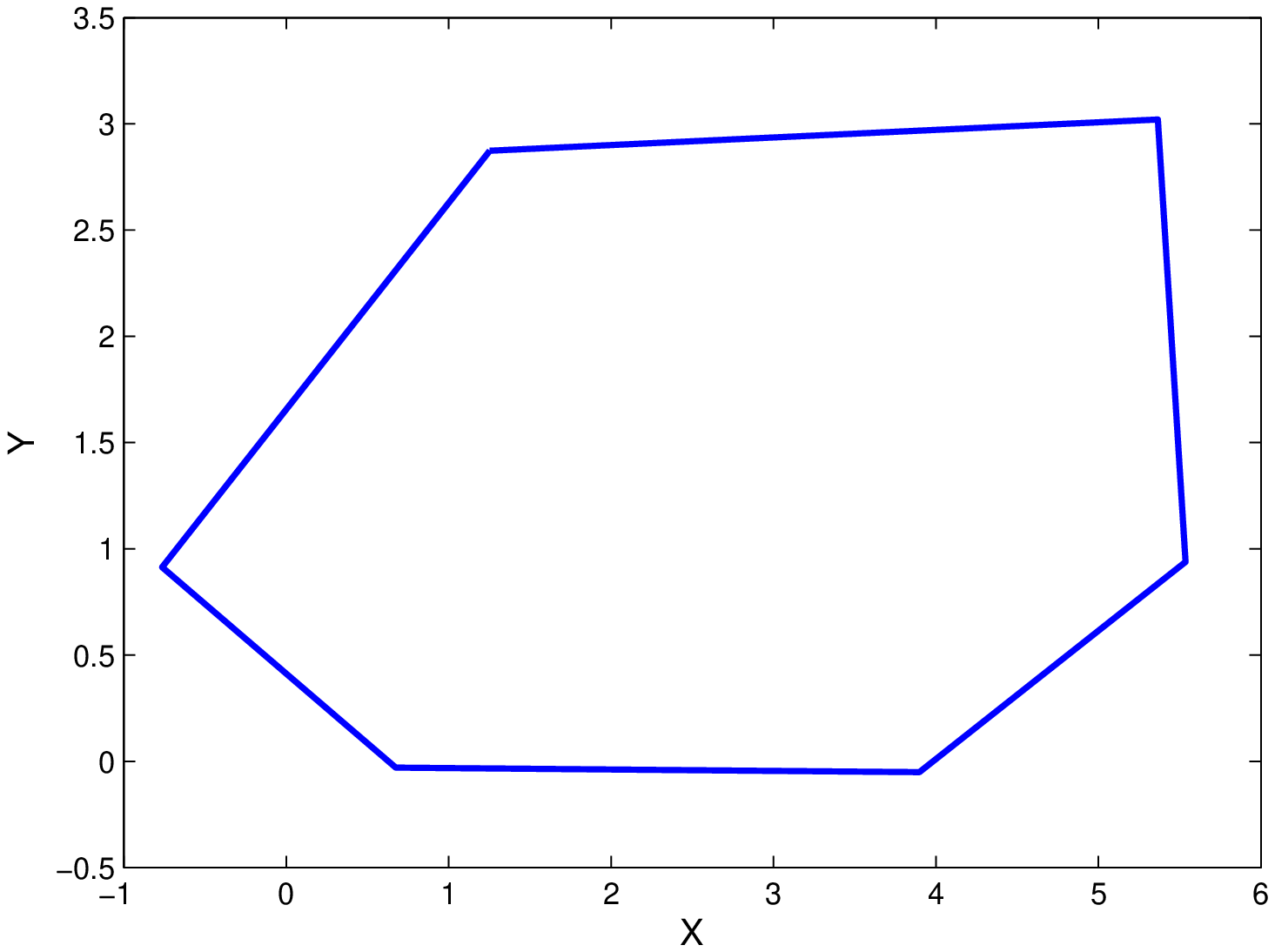}
\end{minipage}}
\subfigure[$t=2.28s$]{
\begin{minipage}[b]{0.23\linewidth}
\centering
\includegraphics[scale=0.2]{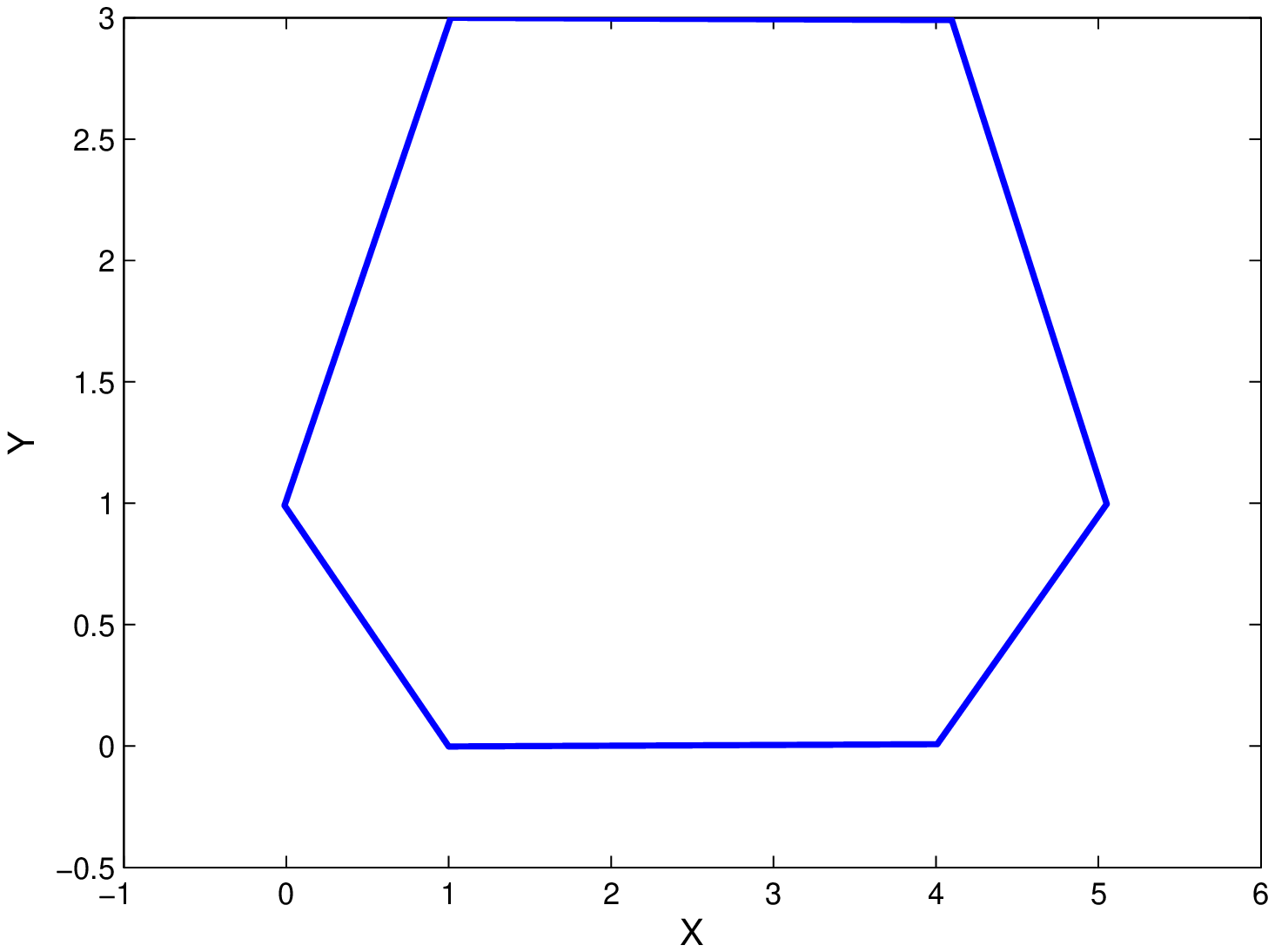}
\end{minipage}}
\subfigure[$t=3.08s$]{
\begin{minipage}[b]{0.23\linewidth}
\centering
\includegraphics[scale=0.2]{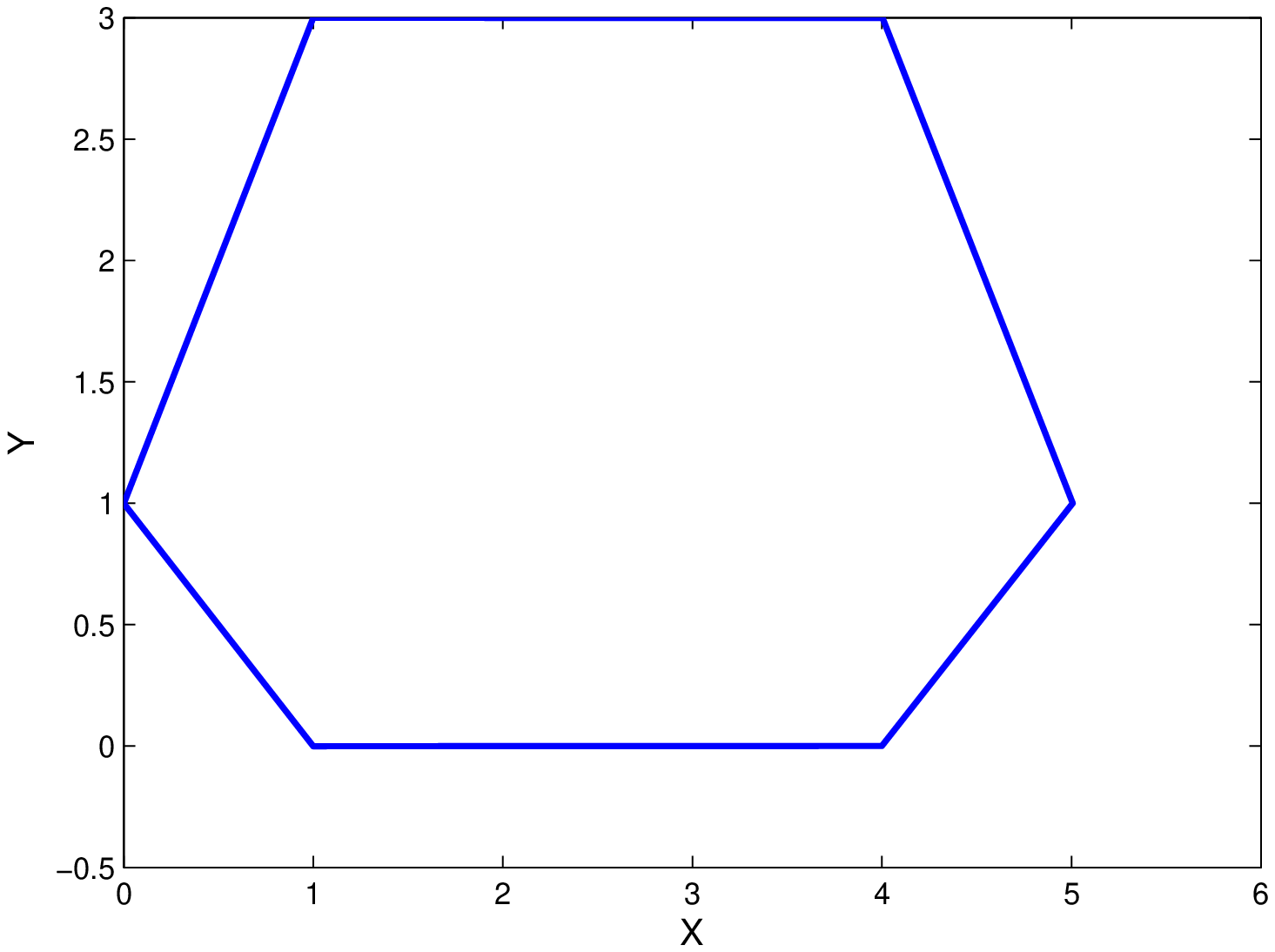}
\end{minipage}}
\caption{Snapshots of the six agents}\label{fig:homo_snapshot}
\end{figure}

\begin{figure}
 \subfigure[Position errors]{\label{fig:pe_homo}
\begin{minipage}[b]{0.45\linewidth}
\centering
\includegraphics[scale=0.25]{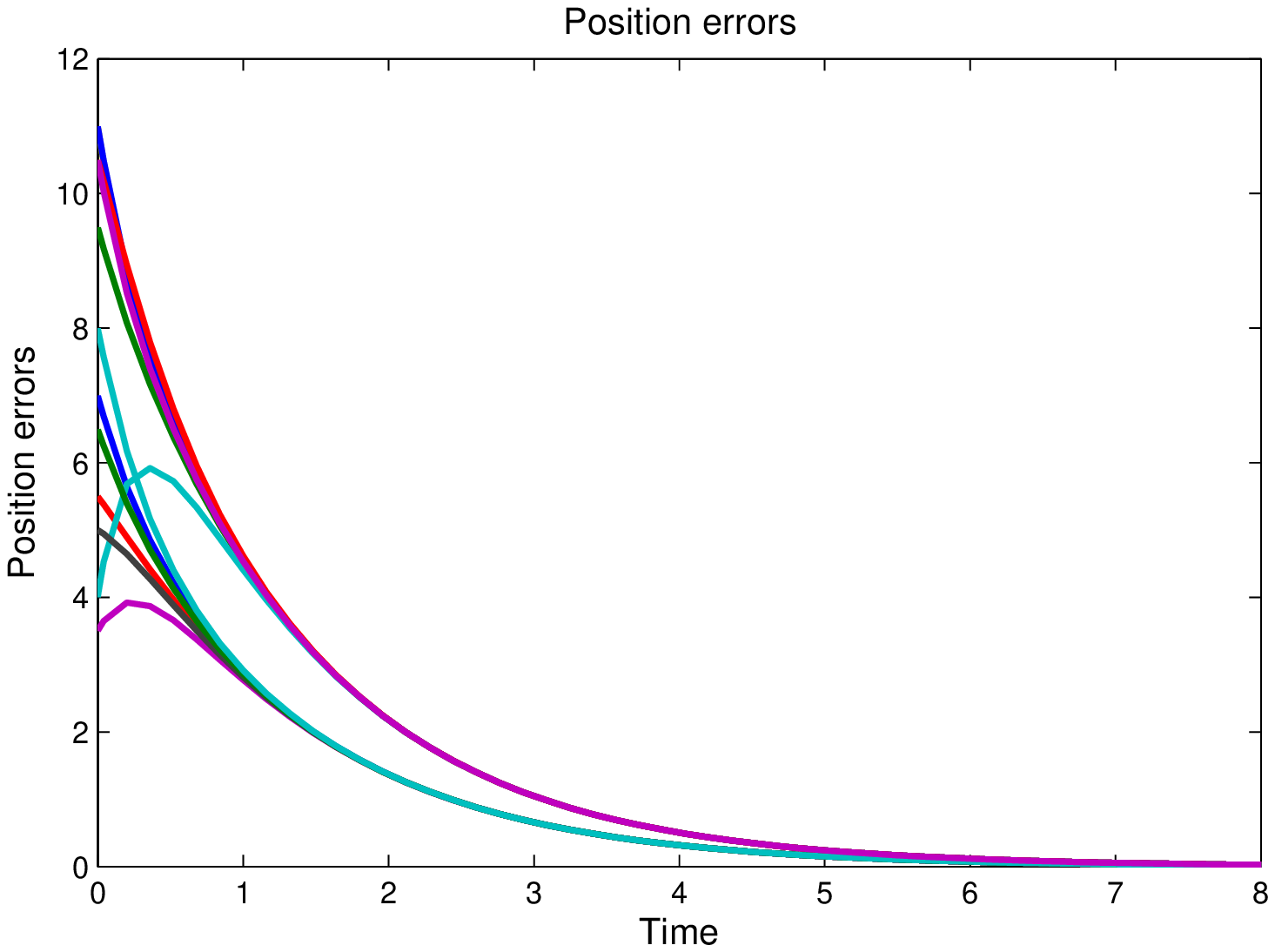}
\end{minipage}}
\subfigure[Formation errors]{\label{fig:fe_homo}
\begin{minipage}[b]{0.45\linewidth}
\centering
\includegraphics[scale=0.25]{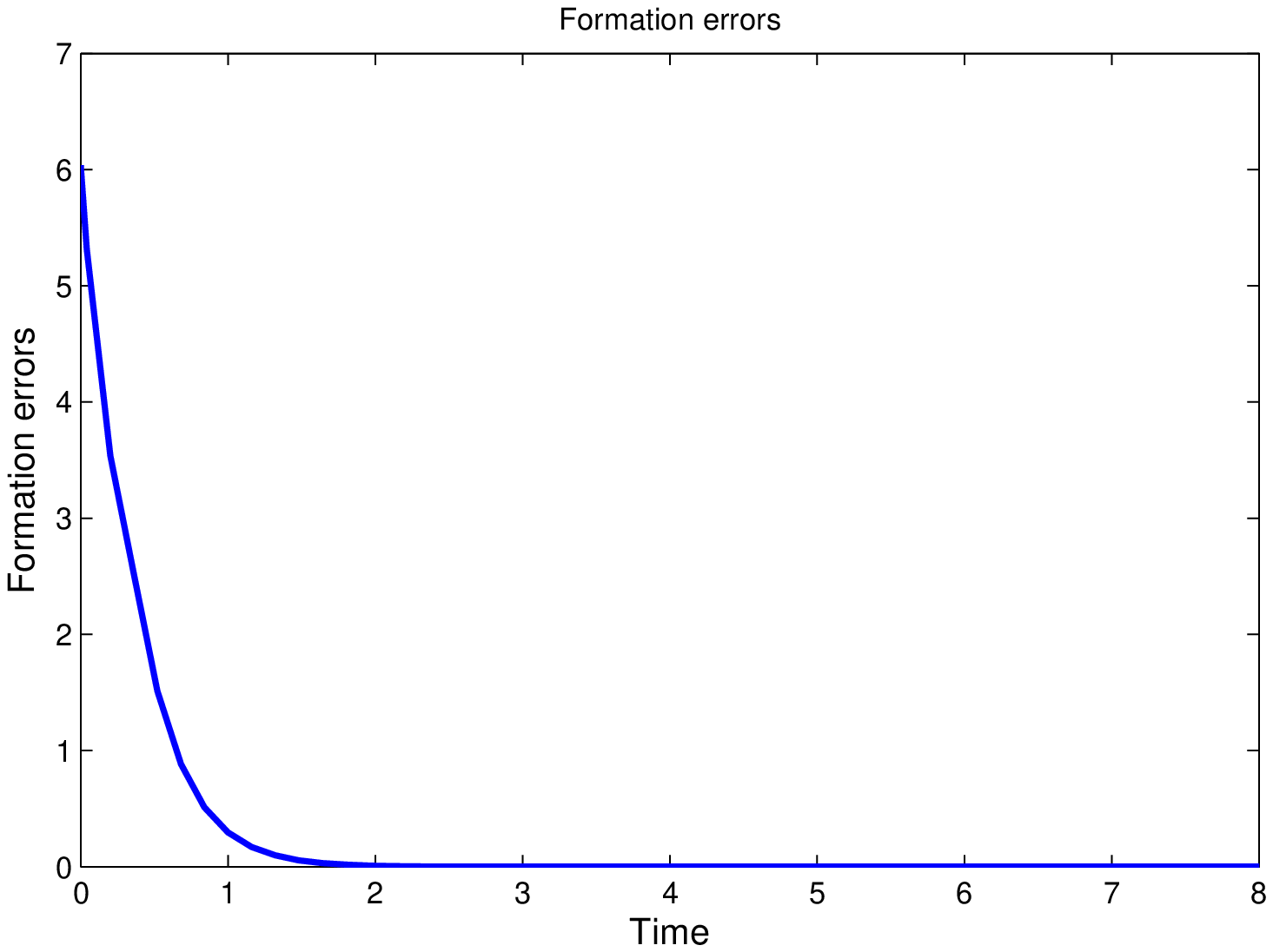}
\end{minipage}}
\caption{Position errors and  formation errors of the six
agents}\label{fig:homo_errors}
\end{figure}

\begin{figure}
 \subfigure[$q=8,p=-2$]{\label{fig:homo_onespot}
\begin{minipage}[b]{0.45\linewidth}
\centering
\includegraphics[scale=0.25]{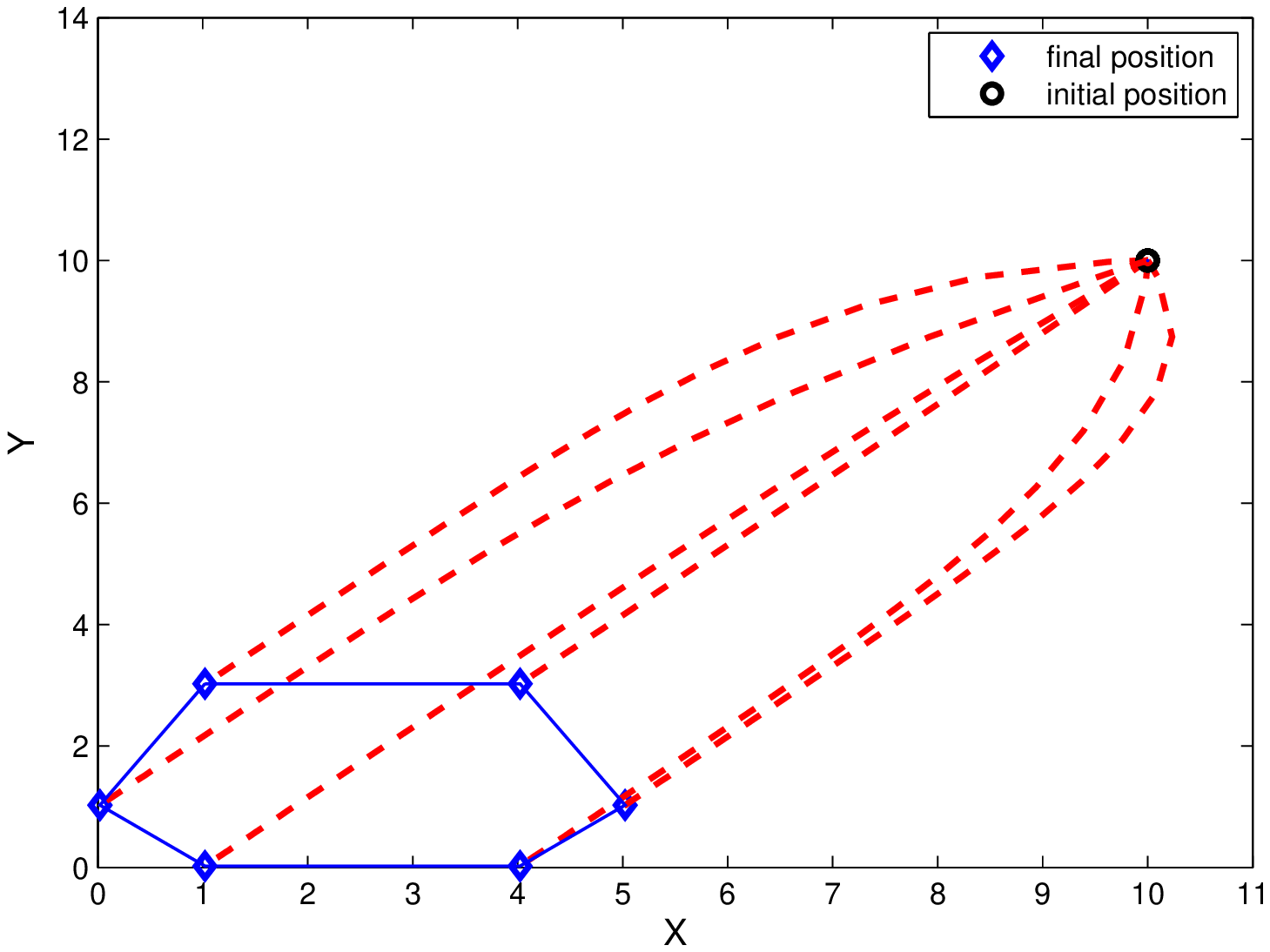}
\end{minipage}}
\subfigure[$q=3,p=-0.2$]{\label{fig:homo_onespot_self}
\begin{minipage}[b]{0.45\linewidth}
\centering
\includegraphics[scale=0.25]{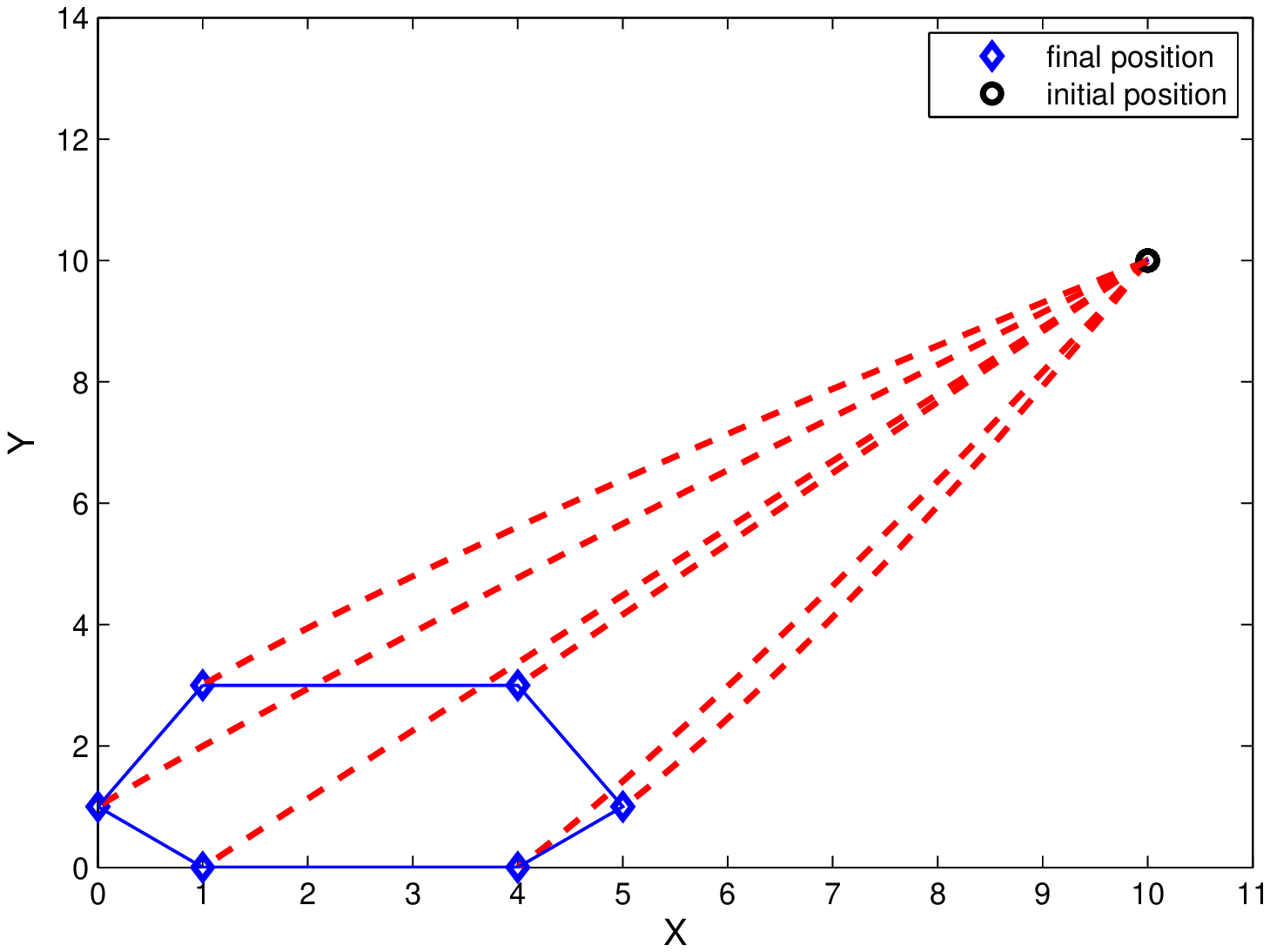}
\end{minipage}}
\caption{Formation-first performance  and the destination-first
performance}\label{fig:homo_stratage}
\end{figure}

\begin{figure}
\begin{center}
\includegraphics[scale=0.4]{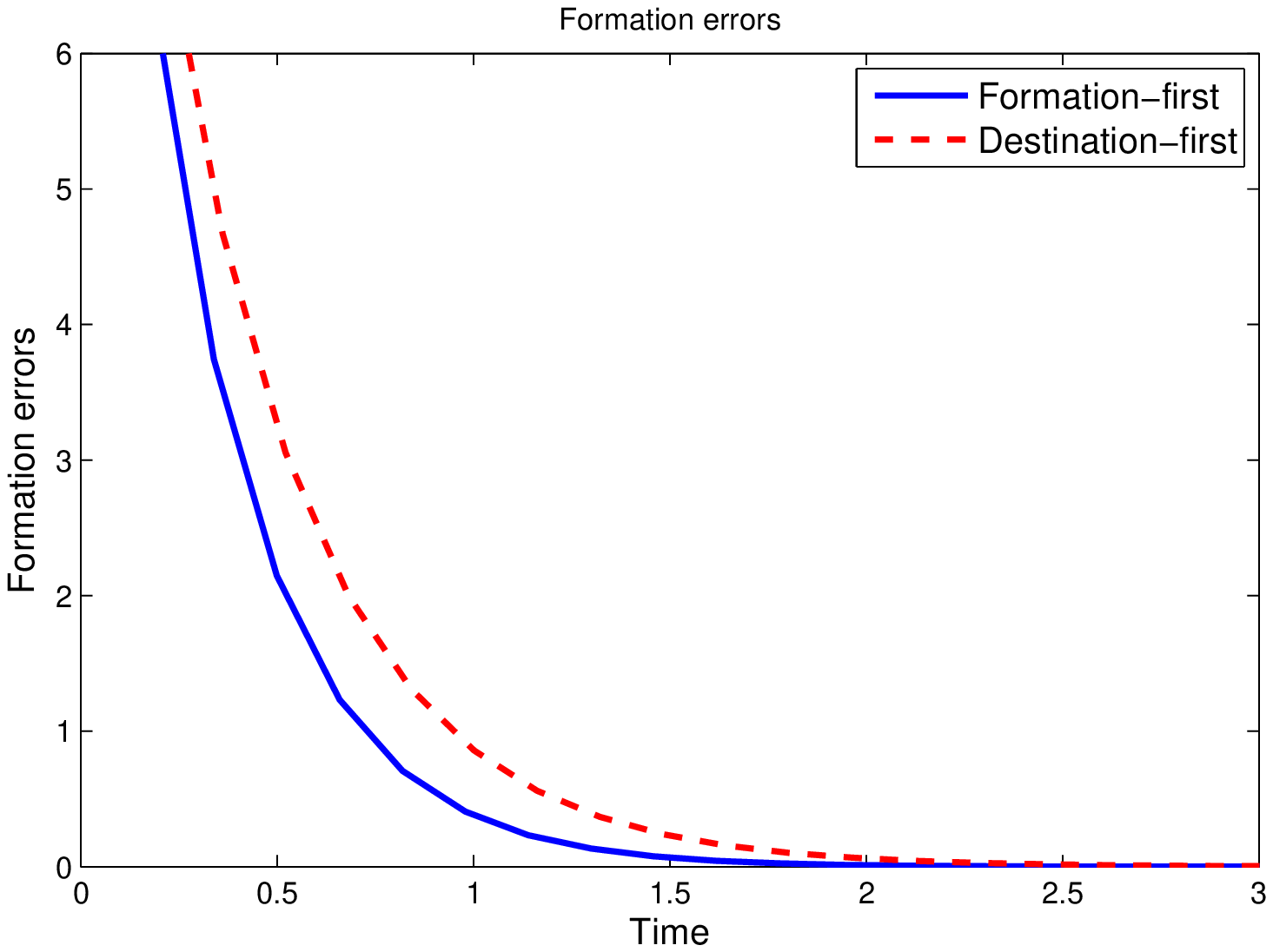}
\caption{Formation errors for different kinds of cooperative
performance}\label{fig:homo_compare}
\end{center}
\end{figure}

\subsection{Formations with heterogeneous agents}
When agents exhibit different local dynamics and when the
interactions between neighboring agents various from one another,
Algorithm \ref{alg1} is proposed to solve for the set of optimal
interaction parameters.

We consider six agents in a formation system that communicate over
graph $G_b$ as shown in Fig. \ref{fig:topologyb}.

The local dynamics of agents are randomly chosen as
\begin{align}\label{exa:localdyna}
  &A_1=\left[\begin{smallmatrix}
    -8.12 & -2.34 \\
     -2.34 & -4.38 \\
  \end{smallmatrix}\right],
  A_2=\left[\begin{smallmatrix}
    -2.75 & -5.94 \\
     -5.94 & 1.09 \\
  \end{smallmatrix}\right],
  A_3=\left[\begin{smallmatrix}
    5.83 & -3.51 \\
     -3.51 & 3.76 \\
  \end{smallmatrix}\right]\nonumber\\
  &A_4=\left[\begin{smallmatrix}
   9.55 & -3.22 \\
     -3.22 & -11.30 \\
 \end{smallmatrix}\right],
  A_5=\left[\begin{smallmatrix}
   -0.50 & -3.84 \\
     -3.84 & -7.07 \\
 \end{smallmatrix}\right],
  A_6=\left[\begin{smallmatrix}
    3.48 & -2.09 \\
     -2.09 & -5.20 \\
  \end{smallmatrix}\right]
\end{align}
and the nonzero input matrices are
\begin{align}\label{exa:b}
&B_{\bar{11}}= \left[\begin{smallmatrix}
   3.99 & 0 \\
     0 & -3.67 \\
   \end{smallmatrix}\right],
   B_{\bar{12}}=
   \left[\begin{smallmatrix}  -0.15 & 0 \\
     0 & 0.19 \\
   \end{smallmatrix}\right],
B_{\bar{13}}=
   \left[\begin{smallmatrix}3.53 & 0 \\
     0 & -1.33 \\
   \end{smallmatrix}\right], B_{\bar{16}}=
   \left[\begin{smallmatrix}1.06 & 0 \\
     0 & -0.15 \\
   \end{smallmatrix}\right]\nonumber\\
&B_{\bar{22}}=
   \left[\begin{smallmatrix} -3.20 & 0 \\
     0 & 4.41 \\
   \end{smallmatrix}\right],
   B_{\bar{23}}=
   \left[\begin{smallmatrix}2.89 & 0 \\
     0 & -1.54 \\
   \end{smallmatrix}\right],
   B_{\bar{26}}=
  \left[ \begin{smallmatrix} -2.46 & 0 \\
     0 & 0.41 \\
   \end{smallmatrix}\right],\nonumber\\
&B_{\bar{33}}=
   \left[\begin{smallmatrix} -4.69 & 0 \\
     0 & -2.32 \\
   \end{smallmatrix}\right],
B_{\bar{34}}=
   \left[\begin{smallmatrix}1.57 & 0 \\
     0 & -0.63 \\
\end{smallmatrix}\right],
B_{\bar{35}}=
  \left[ \begin{smallmatrix} 1.49 & 0 \\
     0 & -0.94 \\
\end{smallmatrix}\right],\nonumber\\
&B_{\bar{44}}=
   \left[\begin{smallmatrix} -3.22 & 0 \\
     0 & -1.12 \\
\end{smallmatrix}\right],
B_{\bar{45}}=
   \left[\begin{smallmatrix} -1.23 & 0 \\
     0 & 2.98 \\
\end{smallmatrix}\right],
B_{\bar{46}}=
   \left[\begin{smallmatrix} 0.26 & 0 \\
     0 & -1.31 \\
\end{smallmatrix}\right],\nonumber\\
&B_{\bar{55}}=
   \left[\begin{smallmatrix} -3.45 & 0 \\
     0 & 4.49 \\
\end{smallmatrix}\right],
B_{\bar{56}}=
   \left[\begin{smallmatrix} 0.90 & 0 \\
     0 & 0.08 \\
\end{smallmatrix}\right],\nonumber\\
&B_{\bar{66}}=
   \left[\begin{smallmatrix}-1.05 & 0 \\
     0 & -3.65 \\
\end{smallmatrix}\right]
\end{align}
The quadratic matrix is
\begin{align}\label{exa:q}
&Q_{\bar{11}}= \left[\begin{smallmatrix}
    13.50 & 0 \\
     0 & 22.50 \\
   \end{smallmatrix}\right],
   Q_{\bar{12}}=
   \left[\begin{smallmatrix} -1.17 & 0 \\
     0 & -2.63 \\
   \end{smallmatrix}\right],
Q_{\bar{13}}=
   \left[\begin{smallmatrix}-1.74 & 0 \\
     0 & -1.48 \\
   \end{smallmatrix}\right], Q_{\bar{16}}=
   \left[\begin{smallmatrix} -4.21 & 0 \\
     0 & -3.12 \\
   \end{smallmatrix}\right]\nonumber\\
&Q_{\bar{22}}=
   \left[\begin{smallmatrix} 13.50 & 0 \\
     0 & 22.50 \\
   \end{smallmatrix}\right],
   Q_{\bar{23}}=
   \left[\begin{smallmatrix} -5.48 & 0 \\
     0 & -2.59 \\
   \end{smallmatrix}\right],
   Q_{\bar{26}}=
  \left[ \begin{smallmatrix} -4.77 & 0 \\
     0 & -5.31 \\
   \end{smallmatrix}\right],\nonumber\\
&Q_{\bar{33}}=
   \left[\begin{smallmatrix}18 & 0 \\
     0 & 30 \\
   \end{smallmatrix}\right],
Q_{\bar{34}}=
   \left[\begin{smallmatrix} -0.97 & 0 \\
     0 & -0.89 \\
\end{smallmatrix}\right],
Q_{\bar{35}}=
  \left[ \begin{smallmatrix}-1.77 & 0 \\
     0 & -1.24 \\
\end{smallmatrix}\right],\nonumber\\
&Q_{\bar{44}}=
   \left[\begin{smallmatrix}13.50 & 0 \\
     0 & 22.50 \\
\end{smallmatrix}\right],
Q_{\bar{45}}=
   \left[\begin{smallmatrix} -4.84 & 0 \\
     0 & -5.17 \\
\end{smallmatrix}\right],
Q_{\bar{46}}=
   \left[\begin{smallmatrix} -4.01 & 0 \\
     0 & -5.23 \\
\end{smallmatrix}\right],\nonumber\\
&Q_{\bar{55}}=
   \left[\begin{smallmatrix} 18 & 0 \\
     0 & 30 \\
\end{smallmatrix}\right],
Q_{\bar{56}}=
   \left[\begin{smallmatrix} -2.67 & 0 \\
     0 & -2.29 \\
\end{smallmatrix}\right],\nonumber\\
&Q_{\bar{66}}=
   \left[\begin{smallmatrix} 13.50 & 0 \\
     0 & 22.50 \\
\end{smallmatrix}\right]
\end{align}
and all the other blocks are zeros. Using Algorithm \ref{alg1}, the
subgradient matrix converges to zero after approximately 63 times of
iterations. The interaction parameters in matrix $A'$ are
\begin{align}\label{eq:exmAde} &A'_{\bar{12}}=
   \left[\begin{smallmatrix} 4.79 & 3.60 \\
     -1.12 & -0.96 \\
   \end{smallmatrix}\right],
A'_{\bar{13}}=
   \left[\begin{smallmatrix}-3.30 & 3.43 \\
     6.41 & 2.21 \\
   \end{smallmatrix}\right],
   A'_{\bar{16}}=
   \left[\begin{smallmatrix} 2.68 & -2.72 \\
     0.34 & -0.57 \\
   \end{smallmatrix}\right],\nonumber\\
&   A'_{\bar{23}}=
   \left[\begin{smallmatrix}  -3.23 & 1.97 \\
     10.33 & 0.61 \\
   \end{smallmatrix}\right],
   A'_{\bar{26}}=
  \left[ \begin{smallmatrix}2.31 & -1.82 \\
     -0.96 & -4.78 \\
   \end{smallmatrix}\right],\nonumber\\
& A'_{\bar{34}}=
   \left[\begin{smallmatrix} -3.86 & -0.23 \\
     -0.10 & 1.07 \\
\end{smallmatrix}\right],
A'_{\bar{35}}=
  \left[ \begin{smallmatrix} -2.14 & 0.69 \\
     -1.99 & 0.56 \\
\end{smallmatrix}\right],\nonumber\\
&A'_{\bar{45}}=
   \left[\begin{smallmatrix} 3.65 & 0.63 \\
     -0.38 & -3.88 \\
\end{smallmatrix}\right],
A'_{\bar{46}}=
   \left[\begin{smallmatrix}  -0.38 & -2.40 \\
     -0.35 & -2.73 \\
\end{smallmatrix}\right],\nonumber\\
&A'_{\bar{56}}=
   \left[\begin{smallmatrix} -0.54 & 3.93 \\
     0.14 & 3.23 \\
\end{smallmatrix}\right],
\end{align}
and all the other blocks are zeros. According to the structure of
matrix $A'$, the underlying graph is exactly the one in Fig.
\ref{fig:topologyb}, and the  cost function achieves the minimum of
$J^*=3.39$ for the worst case. We tested 50 samples of randomly
selected $A'$ under the structure constraints. Greater cost values
are observed on all of those samples. The data are recorded in Fig.
\ref{fig:six_costvalue}. The dashed line at the bottom is the value
of $\|X^*\|=3.39$ and all the blue stars are different worst-case
cost values $\bar{J}$ when $A'\in T_G$ changes.  This validates the
effectiveness of the subgradient algorithm.
\begin{figure}
\begin{center}
\includegraphics[scale=0.4]{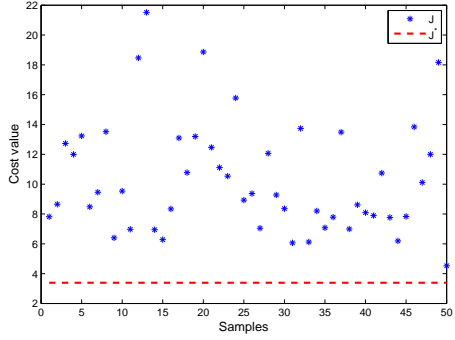}
\caption{Worst-case cost values of six-agent formations. The bottom
dashed line is the optimized cost value
$J^*$}\label{fig:six_costvalue}
\end{center}
\end{figure}
Fig. \ref{fig:trjc_six_opt} is an example of formations of the six
agents under the matched pair $(A_0+A',Q)$. Snapshots during the
process are shown in Fig. \ref{fig:trjc_six_snapshot}.
\begin{figure}
\begin{center}
\includegraphics[scale=0.4]{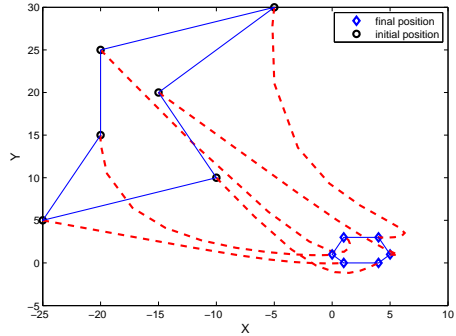}
\caption{Formations of six agents}\label{fig:trjc_six_opt}
\end{center}
\end{figure}

\begin{figure}
 \subfigure[$t=0s$]{
\begin{minipage}[b]{0.23\linewidth}
\centering
\includegraphics[scale=0.2]{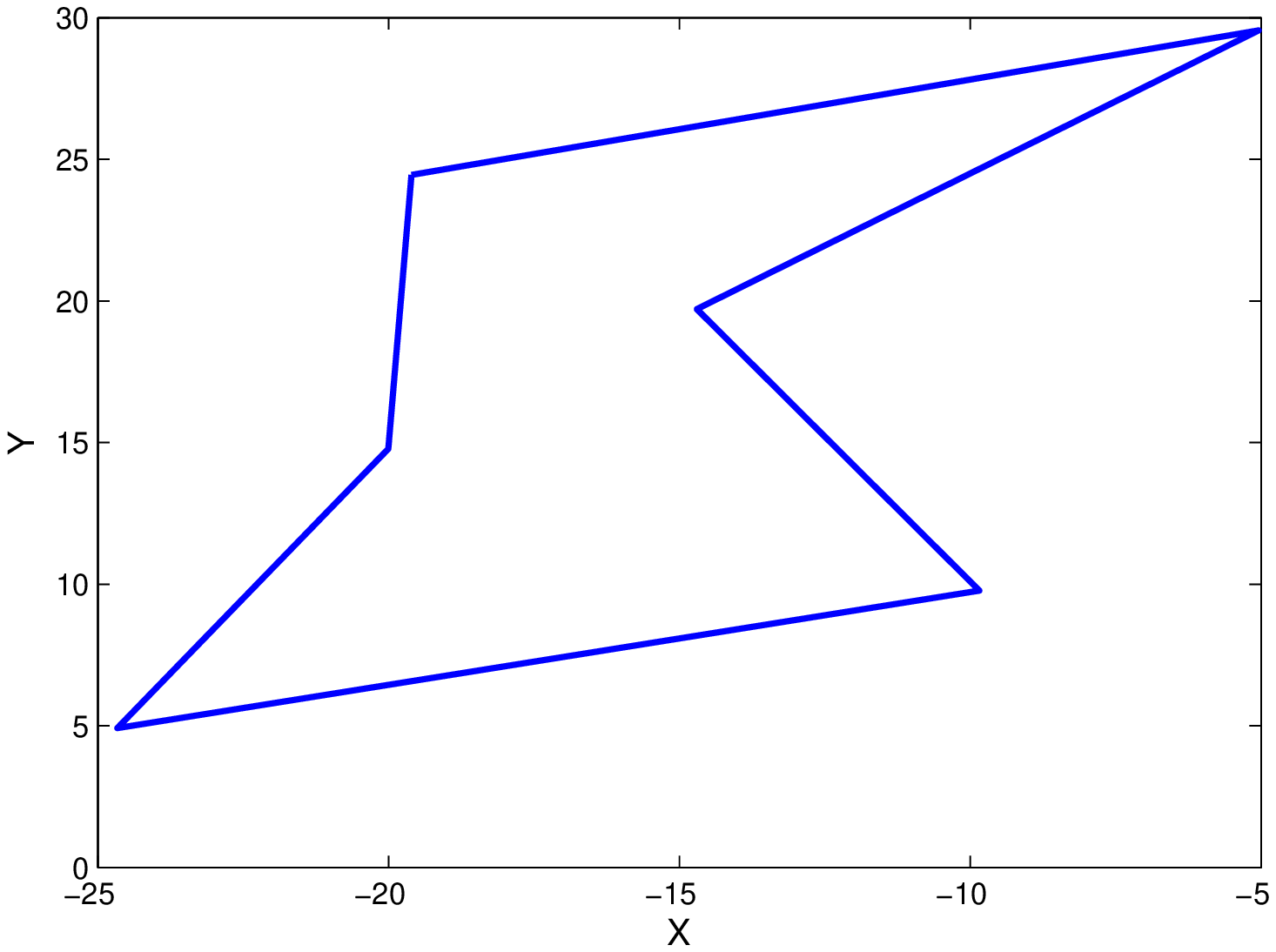}
\end{minipage}}
\subfigure[$t=0.22s$]{
\begin{minipage}[b]{0.23\linewidth}
\centering
\includegraphics[scale=0.2]{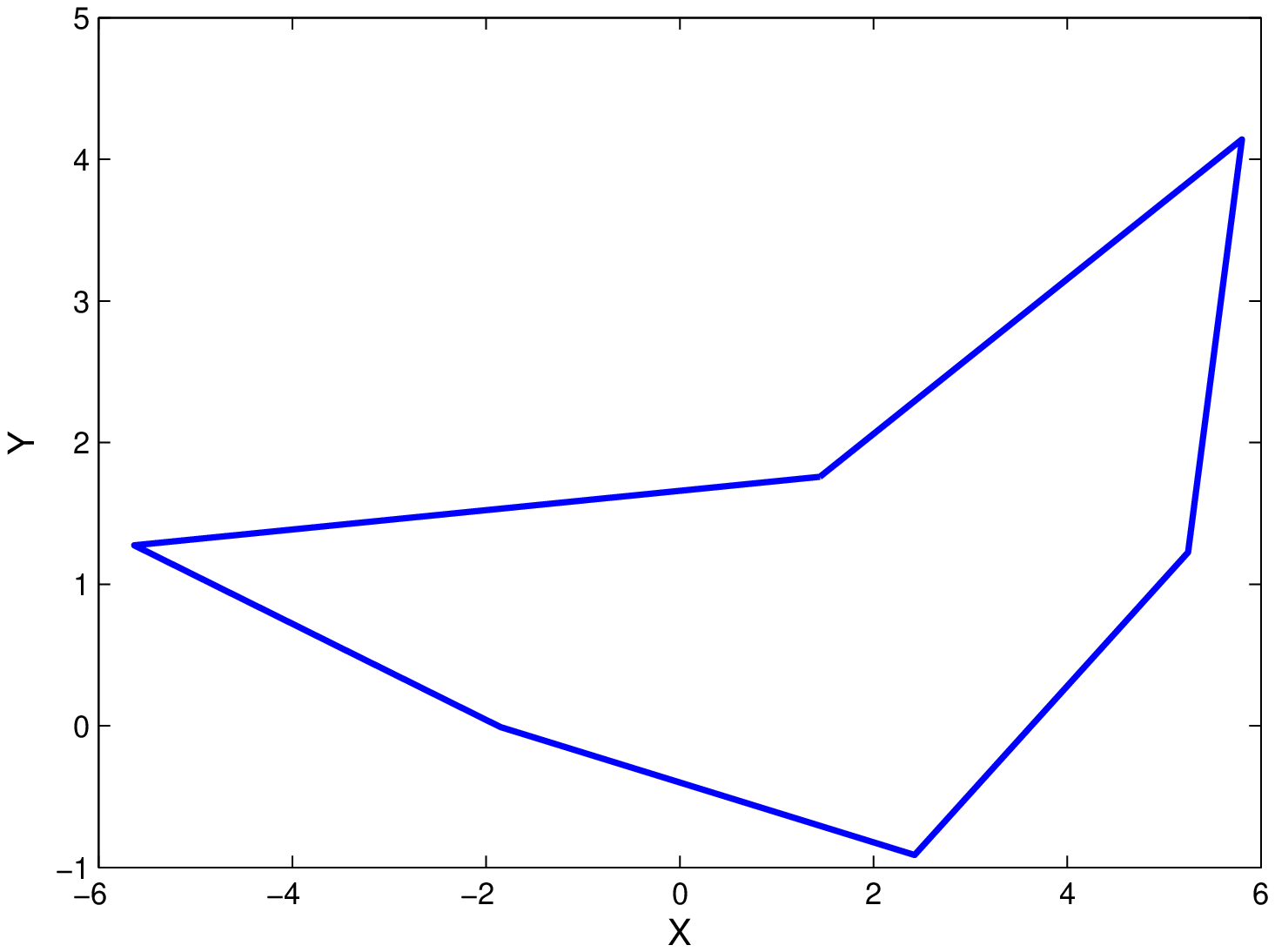}
\end{minipage}}
\subfigure[$t=0.66s$]{
\begin{minipage}[b]{0.23\linewidth}
\centering
\includegraphics[scale=0.2]{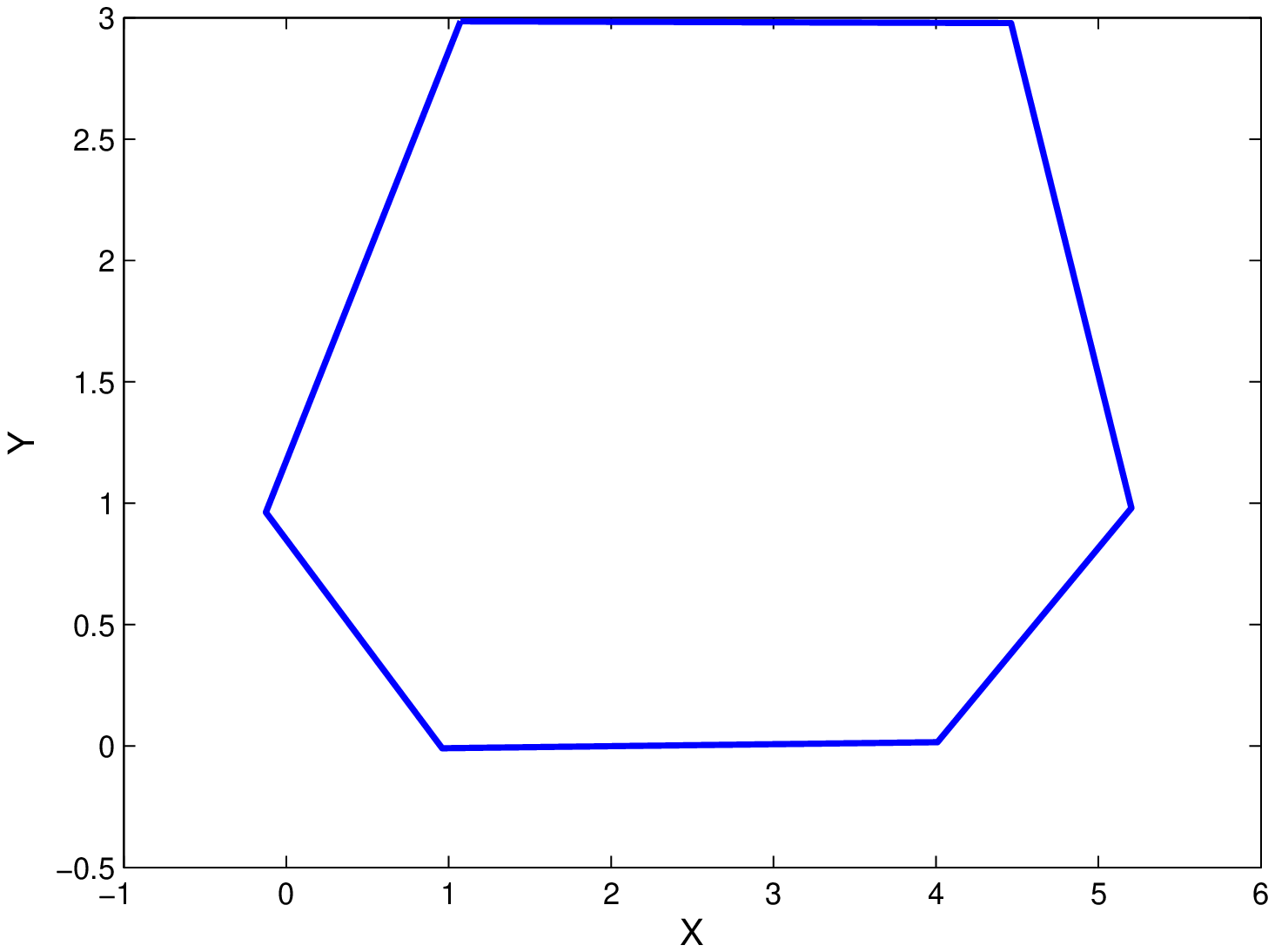}
\end{minipage}}
\subfigure[$t=1.23s$]{
\begin{minipage}[b]{0.23\linewidth}
\centering
\includegraphics[scale=0.2]{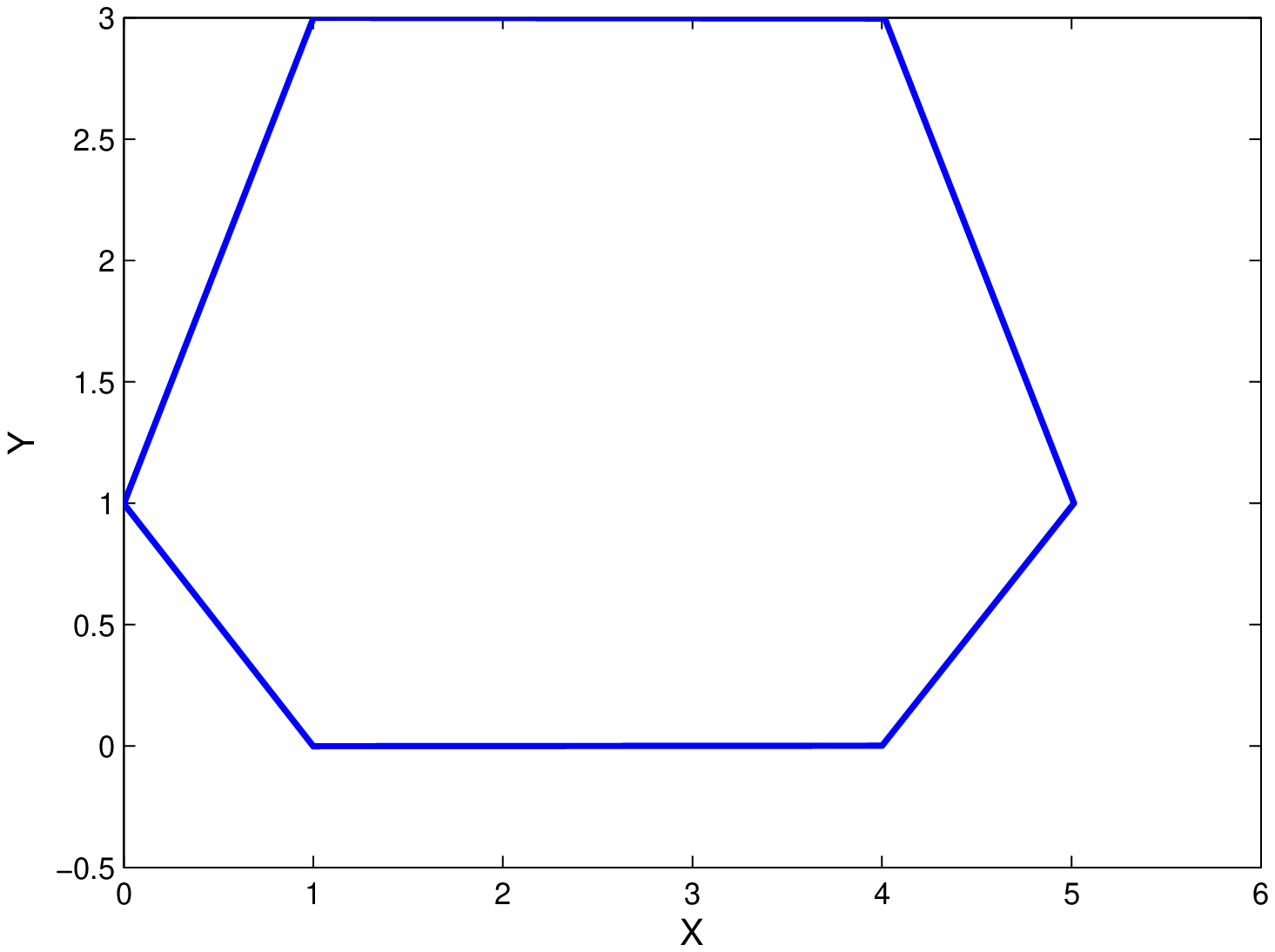}
\end{minipage}}
\caption{Snapshots of the six agents during
attainment}\label{fig:trjc_six_snapshot}
\end{figure}

For comparison, we consider a formation system where agents are
dynamically isolated. The local dynamics, the input matrix and the
cost matrix are congruent to the ones in \eqref{exa:localdyna},
\eqref{exa:b} and \eqref{exa:q} respectively. The optimal
controllers are calculated by the ARE \eqref{eq:areB}. For the worst
case, the cost function has a minimal value of $\bar{J}=4.53$, as
shown in sample number 50 in Fig. \ref{fig:six_costvalue}. We
compare the performance of a system with isolated agents and a
system with optimized couplings \eqref{eq:exmAde}. For both the
situations, agents are initialized at the worst case, i.e.,
$p_0=-3v$ with $v$ being the orthogonal eigenvector of
$\bar{\lambda}(A_0)$ and $\bar{\lambda}(A_0+A')$ respectively. The
two plots in Fig. \ref{fig:trjc_six_compare} show their trajectories
during the stabilization processes. The formation errors for system
under the matched pair $(A_0+A',Q)$ and the pair $(A_0,Q)$ are shown
in Fig. \ref{fig:fe_struc_A}. For the two situations, the snapshots
of the geometries at time slot $t=0.21sec$ are shown in Fig.
\ref{fig:structed_snap}. According to Fig. \ref{fig:fe_and_snap},
the geometry of agents under the matched pair is more similar to the
desired geometry at the same time slot. Formation systems with
isolated agents have relatively greater value of cost function, and
thus its relative formation errors are consistently greater than the
one with agents interacted over optimal parameters.

\begin{figure}
 \subfigure[Optimized couplings]{\label{fig:trjc_six_compare_opt}
\begin{minipage}[b]{0.45\linewidth}
\centering
\includegraphics[scale=0.25]{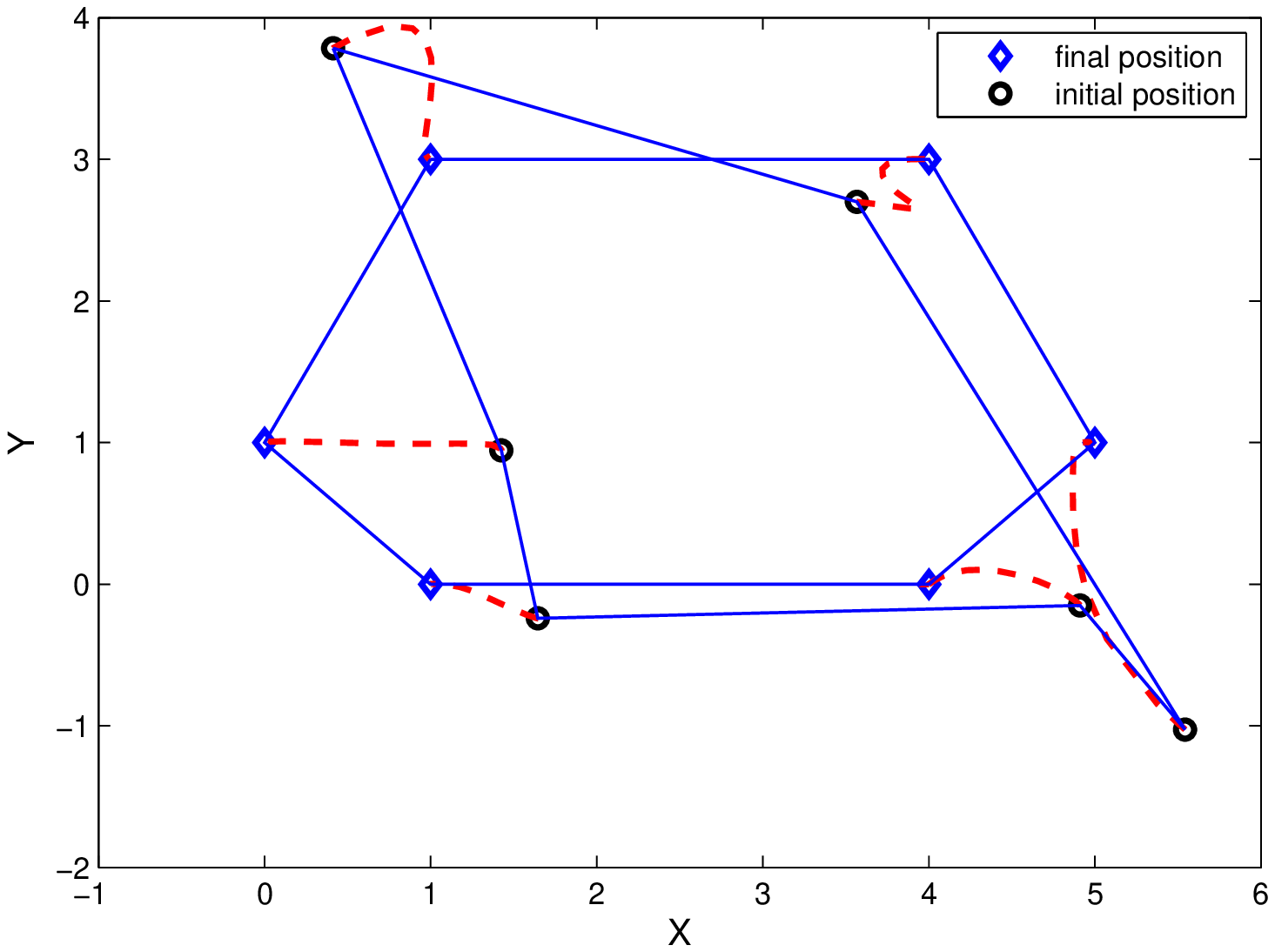}
\end{minipage}}
\subfigure[Isolated agents]{\label{fig:trjc_six_compare_de}
\begin{minipage}[b]{0.45\linewidth}
\centering
\includegraphics[scale=0.25]{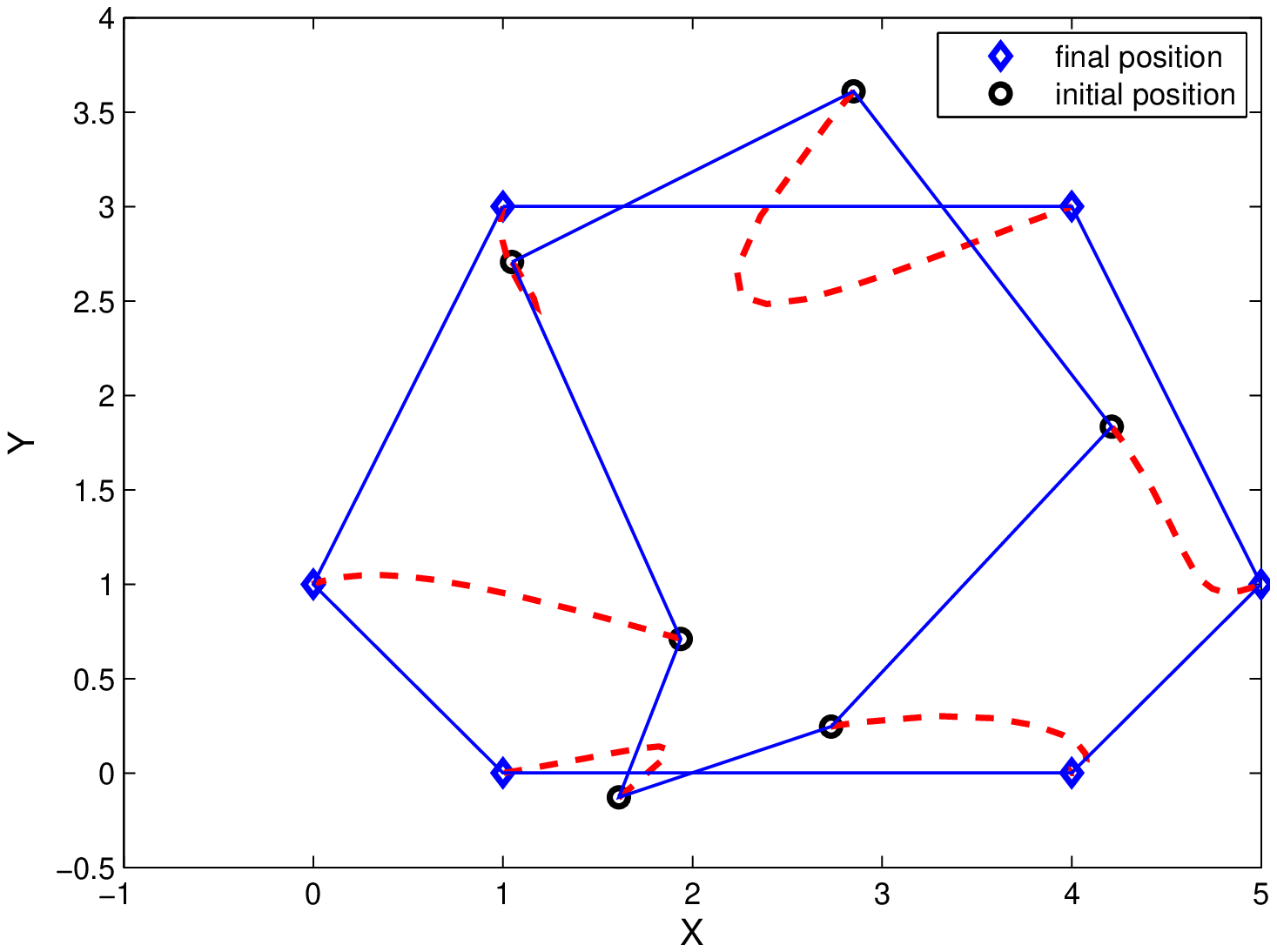}
\end{minipage}}
\caption{Trajectories of six agents being initialized at their worst
cases.}\label{fig:trjc_six_compare}
\end{figure}

\begin{figure}
 \subfigure[Formation errors]{\label{fig:fe_struc_A}
\begin{minipage}[b]{0.45\linewidth}
\centering
\includegraphics[scale=0.25]{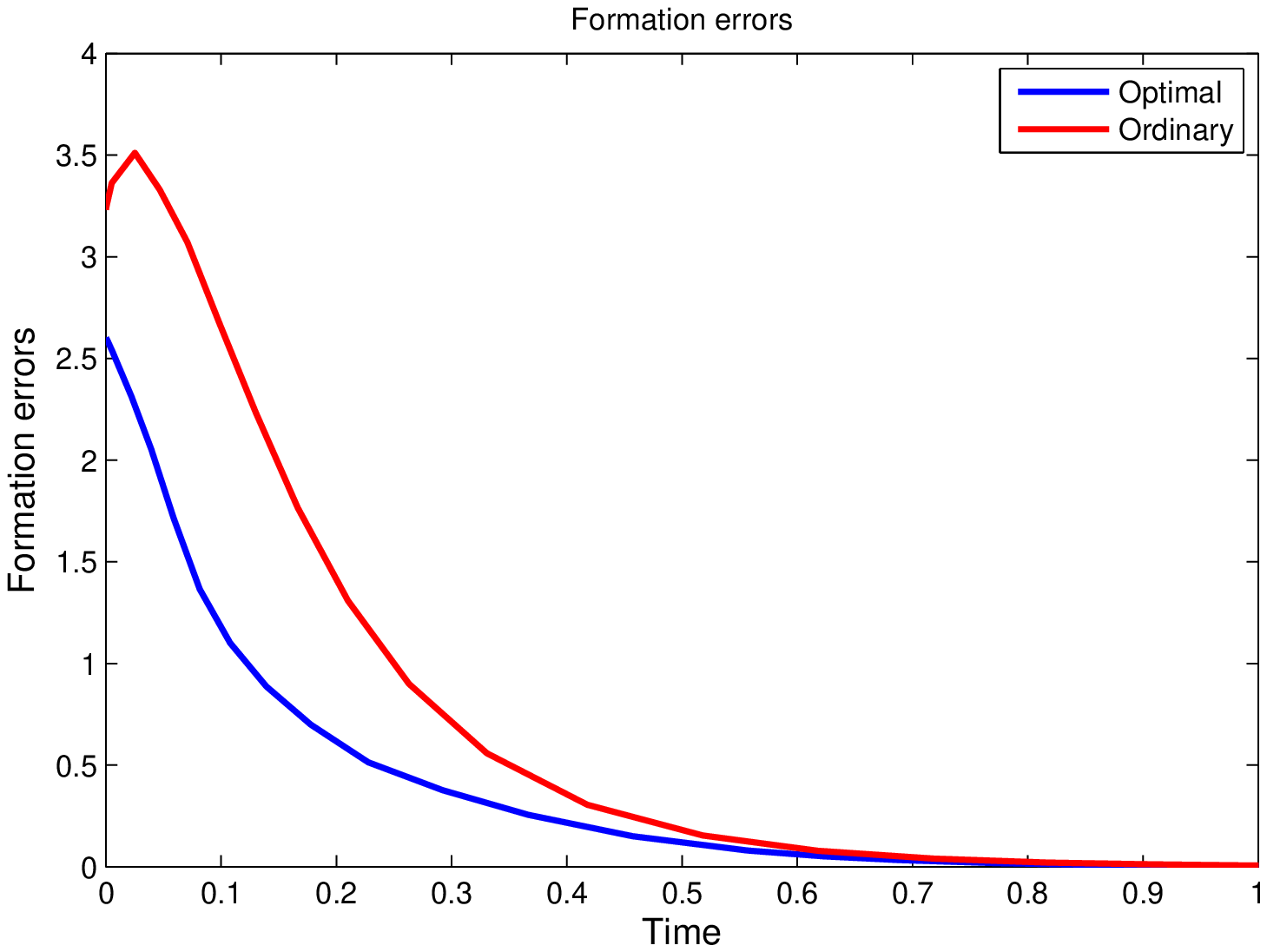}
\end{minipage}}
\subfigure[Snapshots at $t=0.21sec$]{\label{fig:structed_snap}
\begin{minipage}[b]{0.45\linewidth}
\centering
\includegraphics[scale=0.25]{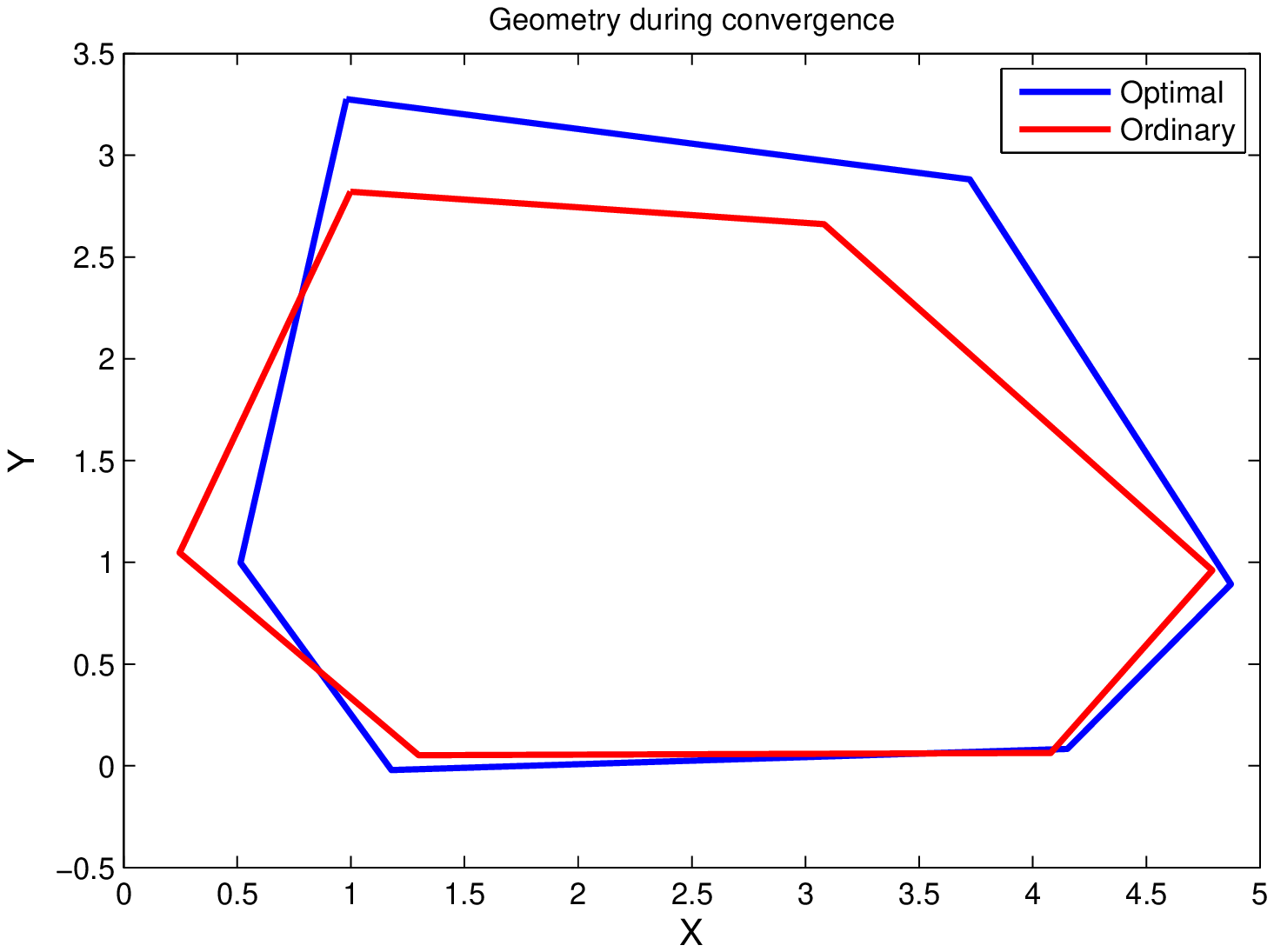}
\end{minipage}}
\caption{Formation errors and the snapshots of systems with isolated
agents and optimal coupled agents}\label{fig:fe_and_snap}
\end{figure}

%
%

\section{Conclusions}
In this paper, we proposed an optimal LQR control strategy for a
group of agents to maintain desired geometries while moving towards
the destination. Upon the three kinds of cooperative performance
that were characterized by the cost matrix, it was proved that,
compared to system with agents communicated only through the control
channel, the upper bound of the minimum cost value could be further
reduced by adjusting the interaction parameters between the pairs of
neighboring agents and by the LQR controllers. Distributed
controllers that inherited the desired underlying graph were
developed for the set of homogeneous agents. When agents were
heterogenous ones, the optimization problem was relaxed so as to
avoid the nonsmoothness and nonlinearity in the constraints. Under
the constraint of a sparse underlying graph, parameters among agents
were generated by a reliable iterative algorithm. Numerical examples
further illustrated the relationship between the cost function and
the cooperative performance we considered.

\bibliographystyle{siam}
\bibliography{C:/papers/odds/odds}
\end{document}